\documentclass[acmsmall]{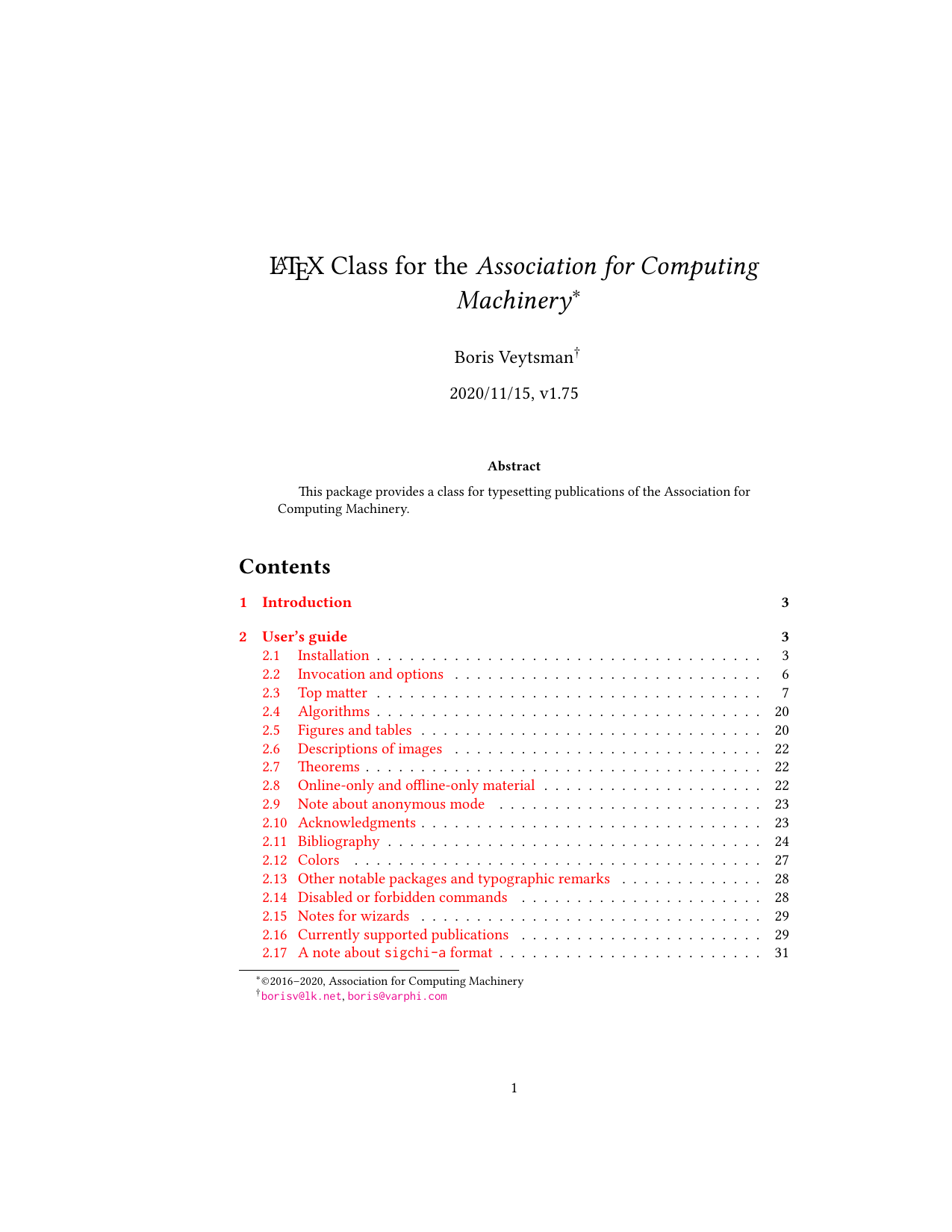}
\settopmatter{printfolios=true,printccs=false,printacmref=false}
\usepackage[ruled,linesnumbered]{algorithm2e}
\usepackage{xcolor}
\usepackage{algpseudocode}
\usepackage{multirow}
\usepackage{diagbox}
\usepackage{makecell}
\usepackage{amsmath}
\usepackage{booktabs}
\usepackage{threeparttable}
\usepackage{graphics}
\usepackage{epsfig}
\usepackage{setspace}
\usepackage{xspace}
\usepackage{extarrows}
\usepackage{bm}
\usepackage{tabu}
\usepackage{subfig}
\usepackage{color}
\usepackage{xcolor}
\definecolor{keywordcolor}{rgb}{0.8,0.1,0.5}
\usepackage{listings}
\usepackage{tikz}
\newcommand*{\circled}[1]{\lower.7ex\hbox{\tikz\draw (0pt, 0pt)%
    circle (.4em) node {\makebox[1em][c]{\small #1}};}}
\lstset{breaklines}
\lstset{extendedchars=false}
\lstset{language=python, 
	keywordstyle=\color{keywordcolor} \bfseries,
	identifierstyle=,
	basicstyle=\ttfamily, 
	commentstyle=\color{blue} \textit,
	stringstyle=\ttfamily, 
	showstringspaces=false,
	tabsize=2,
  numbers = left,                     
  numberstyle = \small,               
	captionpos=b
}
\lstset{escapeinside=``}

\usepackage{enumitem}
\usepackage{eucal} 
\newcommand{\blue}[1]{{\color[rgb]{0,0,1}#1}}

\newcommand{\red}[1]{{\color[rgb]{0.5,0,0}#1}}
\newcommand{\mainname}{\textit{MetHyl}\xspace}
\newcommand{\plusname}{$\textit{MetHyl}^{+}$\xspace}

\newcommand{\cat}{+ \!\!\! + \,}

\newcommand{\F}{\mathsf{F}}
\newcommand{\hylo}[1]{[ \! [  #1 ] \! ] } 

\makeatletter
\newcommand*\bigcdot{\mathpalette\bigcdot@{.5}}
\newcommand*\bigcdot@[2]{\mathbin{\vcenter{\hbox{\scalebox{#2}{$\m@th#1\bullet$}}}}}
\makeatother

\makeatletter
\newcommand{\removelatexerror}{\let\@latex@error\@gobble}
\makeatother





\acmJournal{PACMPL}
\acmVolume{1}
\acmNumber{POPL} 
\acmArticle{1}
\acmYear{2020}
\acmMonth{1}
\acmDOI{} 
\startPage{1}

\setcopyright{none}

\bibliographystyle{ACM-Reference-Format}

\citestyle{acmauthoryear}   

\usepackage{booktabs}   

\begin{document}
 

  \title[Automating Thinning Theorem: Synthesizing Efficient Dynamic Programming Algorithms]{Automating Thinning Theorem: Synthesizing Efficient Dynamic Programming Algorithms}

  \author{Ruyi Ji}
  \affiliation{
    \streetaddress{Key Lab of High Confidence Software Technologies, Ministry of Education Department of Computer Science and Technology, School of Computer Science}
    \institution{Peking University}
    \city{Beijing}
    \country{China}            
  }
  \email{jiruyi910387714@pku.edu.cn}          
  
  \author{Tianran Zhu}
  \affiliation{
    \streetaddress{Key Lab of High Confidence Software Technologies, Ministry of Education Department of Computer Science and Technology, School of Computer Science}
    \institution{Peking University}
    \city{Beijing}
    \country{China}            
  }
  \email{moiezen@pku.edu.cn}   
  
  \author{Yingfei Xiong}
  \authornote{Corresponding author}
  \affiliation{
    \streetaddress{Key Lab of High Confidence Software Technologies, Ministry of Education Department of Computer Science and Technology, School of Computer Science}
    \institution{Peking University}
    \city{Beijing}
    \country{China}            
  }
  \email{xiongyf@pku.edu.cn}          
  
  \author{Zhenjiang Hu}
  \affiliation{
    \streetaddress{Key Lab of High Confidence Software Technologies, Ministry of Education Department of Computer Science and Technology, School of Computer Science}
    \institution{Peking University}
    \city{Beijing}
    \country{China}            
  }
  \email{huzj@pku.edu.cn}          

\begin{abstract} Dynamic programming is an important optimization technique, but designing efficient dynamic programming algorithms can be difficult for even professional programmers. Thinning, a technique developed for systematically deriving efficient dynamic programming algorithms, has received much attention in studies because of its effectiveness for a large class of problems. Despite the success of thinning in theory, its practical usage is still limited because (1) applying thinning requires mathematical and algorithmic background, and (2) applying thinning solely may not be enough to generate algorithms as efficient as proposed by human experts.
 
In this paper, we propose two approaches, \mainname and \plusname, to resolve both problems. First, \mainname automates the application of thinning via program synthesis, and thus eliminates the burden to the user for applying thinning. Second, \plusname integrates three rules into \mainname that optimizes three important factors on the time complexity of dynamic programming algorithms that are ignored by thinning, and thus make it able to automatically generate expert-level dynamic programming algorithms on many tasks.

We evaluate our approaches on $37$ tasks related to $16$ optimization problems collected from \textit{Introduction to Algorithm}, a popular textbook for algorithm courses. The results show that \plusname achieves exponential speed-ups on $97.3\%$ tasks with an average time cost of less than one minute. Moreover, \plusname generates algorithms that are as efficient as the reference programs provided by human experts on $70.3\%$ tasks.
\end{abstract}

\ccsdesc[500]{Software and its engineering~General programming languages}
\ccsdesc[300]{Social and professional topics~History of programming languages}

\maketitle
 
\section{Introduction} \label{section:introduction}
\textit{Combinatorial Optimization} is a topic on finding an optimal solution from a finite set of valid solutions~\cite{schrijver2003combinatorial}. Combinatorial optimization problems (COPs), such as the knapsack problem and the traveling salesman problem, widely exist in various domains. Solving a COP is usually difficult as the number of valid solutions can be extremely large. 

\textit{Dynamic programming} is an important technique for solving COPs. A dynamic programming algorithm can be implemented in the top-down approach or the bottom-up approach, where the top-down approach is also known as \emph{memoization}. Given a recursive function, memoization can be easily implemented by caching the results of existing calls. Though obtaining an arbitrary memoization algorithm is trivial, different memoization algorithms could have huge performance differences. Designing an \emph{efficient} memoization algorithm for a specific problem is difficult and takes algorithmic efforts. 

Motivated by the importance and difficulty of designing efficient memoization, many approaches have been proposed for systematically transforming a plain program into an efficient memoization algorithm. In this paper, we consider one important approach among them, namely \textit{thinning}~\cite{DBLP:books/daglib/0096998}. Thinning transforms a plain program specified by a recursive generator, which generates all valid solutions, and an objective function, which evaluates the objective value for each solution, into a more efficient program that does not consider most solutions.
By previous studies~\cite{DBLP:conf/plilp/Moor95, DBLP:conf/flops/MorihataKO14,DBLP:conf/pepm/Mu08,DBLP:books/daglib/0096998, DBLP:journals/jfp/Bird01a, sasano2000make, DBLP:journals/ngc/Morihata11}, thinning can derive efficient memoization for a large class of COPs.

However, despite the success of thinning in theory, its help to average programmers is still limited because of two shortages. First, grasping the usage of thinning requires mathematical and algorithmic background. On the one hand, the formal definition of thinning is highly abstracted and involves concepts in the category theory. On the other hand, thinning requires the user to provide a proper preorder on solutions, and in many cases, finding such a preorder is non-trivial and relies on algorithmic intuitions. Therefore, learning and using thinning are both difficult for most programmers.

Second, though thinning can generate efficient memoization algorithms for many COPs, applying this approach solely is usually not enough to achieve an algorithm that is as efficient as the one proposed by human experts. In our evaluation, the time complexity of the memoization derived by thinning is asymptotically larger than the reference solutions on $34/37(91.9\%)$ tasks in our dataset.  

In this paper, we make two contributions to resolve these shortages respectively. For the first shortage, to remove the burden from the user, we show that the application of thinning can be fully automated via program synthesis. \textit{The first contribution of this paper is a fully automated approach for thinning, namely \mainname}. \mainname treats the application of thinning as a program synthesis task for the preorder and follows the framework of programming-by-example~\cite{DBLP:conf/ijcai/ShawWG75}. Given a recursive generator, an objective function, and several concrete instances of the COP, \mainname extracts a set of examples for the preorder according to the theory of thinning, where each example specifies that the effectiveness of one solution is not dominated by another. \mainname synthesizes a valid preorder from the examples via a novel synthesis algorithm, and then automatically generates an efficient memoization algorithm by thinning with the synthesized preorder. To use \mainname, the user needs neither to find out a proper preorder him/herself nor to learn anything about thinning. In this way, the difficulty of using thinning in practice is greatly reduced.

We implement \mainname and evaluate it on $37$ tasks collected from \textit{Introduction to Algorithm}~\cite{cormen2009introduction}, a popular textbook for algorithm courses. The results show that (1) \mainname successfully synthesizes a preorder for thinning on $36/37 (97.3\%)$ tasks with an average time cost of $4.21$ seconds, and (2) the program generated by \mainname achieves exponential speed-ups against the plain program on $31/37 (83.8\%)$ tasks. As mentioned before, we also compare the results of \mainname with the reference solutions to COPs in \textit{Introduction to Algorithm} provided by \citet{cormen2009introduction} and \citet{algsolutions}. The results demonstrate the gap between thinning and human experts: \mainname achieves the same time complexity as the reference solution on only $3/37 (8.1\%)$ tasks.

For the second shortage, to further improve the memoization generated by \mainname, we analyze the time complexity of memoizing a recursive generator and show that it is determined by four different factors: (1) the number of solutions returned by the generator, (2) the time cost of constructing solutions from the recursive results, (3) the number of memoized search states, and (4) the time cost of constructing recursive search states from the current one. The main shortage of thinning is that it focuses only on the first factor, while a human expert can make a comprehensive optimization on all four factors. Therefore, we also consider the other three factors. \textit{The second contribution of this paper is three other rules and their automation for the remaining three factors}.\begin{itemize}
    \item For the second factor, to reduce the time cost of constructing solutions, our rule replaces the solutions in the plain program, which usually involves inductive data structures, into a tuple comprising a small number of scalar values while keeping the behavior unchanged. In this way, the time cost is greatly reduced, as manipulating a tuple is usually much faster than an inductive data structure.
    \item For the third factor, to reduce the number of memoized search states, our rule requires a proper equivalence relation over search states and optimizes by skipping those search states of which an equivalent search state has been memoized before.
    \item The rule for the fourth factor is similar to the rule for the second factor. It reduces the time cost of constructing search states by replacing them with tuples.
\end{itemize}
Similar to thinning, the automation of these rules also involves program synthesis and follows the framework of programming-by-example. In theory, we show the effectiveness of these three rules: Under certain assumptions, thinning and the three rules together are guaranteed to reduce the time complexity of the input program to pseudo-polynomial.

We integrate these rules into \mainname as \plusname and evaluate it on our dataset. The results demonstrate that the improvement brought by the three rules is significant. First, \plusname achieves exponential speed-ups on $36/37 (97.3\%)$ tasks with an average time cost of $59.2$ seconds. Second, on $26/37 (70.3\%)$ tasks, \plusname achieves the same time complexities as the reference solutions.
\section{Overview} \label{section:overview}
In this section, we introduce the main ideas of thinning, \mainname, and \plusname using a classical COP namely \textit{0/1 knapsack}~\cite{mathews1896partition}. 

\smallskip
\parbox{0.9\columnwidth}{\begin{center}\it
    Given a set of items, each with a weight $w_i$ and a value $v_i$, put a subset of them in a knapsack of capacity $W$ to get the maximum total value in the knapsack.
\end{center}}
\smallskip

\noindent For example, $xs = [(3, 3), (2, 2), (1, 2)], W = 4$ describes an instance of 0/1 knapsack, where three items are available, their weights are $3, 2, 1$ respectively, their values are $3, 2, 2$ respectively, and the capacity of the knapsack is $4$. At this time, the optimal solution is to put the first and the third items, i.e., $[(3, 3), (1,2)]$, where the value sum is $5$. 

In this section, we assume that the number of the items (i.e., $|xs|$) and the capacity (i.e., $W$) are on the same magnitude, denoted by $O(n)$. At this time, there is a standard dynamic programming algorithm for 0/1 knapsack, which runs in $O(n^2)$ time.

\subsection{Problem Specification and Memoization}
To formally describe a COP, we need to specify (1) the set of valid solutions, and (2) the objective value of each solution. In a natural specification of 0/1 knapsack, the set of valid solutions is all subsets of items whose total weight is within the capacity, and the objective value is the total value of items in the subset. 

In this paper, we assume these two parts are specified by two programs $g$ and $o$ respectively. 
\begin{itemize}
    \item The generator $g$ takes the parameters of the problem (in this case, the list of items) as the input and generates all valid solutions for the problem. 
    \item The scorer $o$ is an objective function that maps each solution to its objective value.
\end{itemize}

The code in Figure \ref{figure:01spec} shows one such specification $(g, o)$ for 0/1 knapsack. The parameter $g$ of function $g$ is itself to enable recursion later with a fixed-point combinator and the parameter $xs$ is the list of items. Functions {\it sumw} and {\it sumv} are used to calculate the sum of weights and values for a list of items, respectively. Given a fixed-point combinator \textit{fix}, $\textit{fix}\ g$ exhaustively returns all sublists of items whose total weight is within the capacity, and $o$ calculates the total value of items. For simplicity, we shall directly use $g$ to refer to the recursive version. Note that such a specification may not be unique. Two other specifications for 0/1 knapsack can be found in Section \ref{subsection:program}.

\begin{figure*}
\small
\begin{minipage}[c]{.49\linewidth}{
    \begin{align*}
        &g =  \lambda g. \lambda xs. \text{ if }|xs| = 0\text{ then } [[]] \\
        & \qquad \text{else } (g\ (\textit{tail}\ xs)) \cat \\
        &\qquad \quad \big[(head\ xs)::p\ \big|\ p \in g\ (\textit{tail}\ xs), \\
        &\qquad \qquad \quad (\textit{sumw}\ p) + (head\ xs).1 \leq W\big] \\
        &o =  \lambda p. (\textit{sumv}\ p)
    \end{align*}
    \vspace{-2em} 
    \caption{One specification $(g, o)$ for 0/1 knapsack, where $W$ is a global variable representing the capacity.}\label{figure:01spec}
}\end{minipage}\ \ \ 
\begin{minipage}[c]{.45\linewidth}{
    \begin{alignat*}{2}
        &m &=\,& \lambda xs. xs \\
        &mem &\,=\,& \lambda mem. \lambda xs.  \text{ if } \textit{buffer}[m\ xs] = \bot \text{ then } \\
        & && \quad\textit{buffer}[m\ xs] \leftarrow g\ mem\ xs;  \\
        &&& \textit{buffer}[m\ xs] \\
        &prog &\,=\,& \lambda xs. \textit{argmax}\ o\ ((\textit{fix}\ mem)\ xs)
    \end{alignat*}
    \vspace{-1.3em} 
    \caption{A template for transforming programs $(g, o)$ to a memoization algorithm.}  \label{figure:memoization}
}\end{minipage}
\end{figure*}

     
We can easily construct a memoization algorithm for 0/1 knapsack from specification $(g, o)$ via a template shown in Figure \ref{figure:memoization}. In this template, $m$ is a function that returns a key for an invocation of $g$. Two invocations have the same key only if their outputs are the same. Currently, $m$ is the identity function to trivially ensure this property. Function $mem$ implements the memoization algorithm and \textit{buffer} is a global map that stores the result for each key. Finally, $prog$ returns the optimal solution from all solutions. Here $\textit{argmax}\ o\ ps$ chooses the optimal solution in a list of solutions $ps$ based on the objective function $o$, and $\textit{fix}$ is a fixed-point combinator.

Though the template has effectively reused repeated invocations to $g$, the time complexity of the generated algorithm is still exponential to the number of items. Compared to the standard $O(n^2)$-time algorithm for 0/1 knapsack, such a trivial memoization algorithm is unsatisfactory.

\subsection{Thinning and Its Shortages} 
Before discussing the derivation of efficient memoization algorithms, we first introduce two notations for the convenience of presentation. To distinguish the input and the output of the outermost invocation to $g$ and recursive invocations, we call the input of an arbitrary invocation to $g$ a \textit{search state} (or \textit{state}), as an invocation represents a step in a depth-first search, and call the solution generated by a recursive invocation to $g$ a \textit{partial solution}, as it is not a full solution yet.

\noindent \textbf{The Main Idea of Thinning}. The main reason for the ineffectiveness of the trivial memoization algorithm is that the number of partial solutions returned by the generator $g$ can be exponential to the number of items, and memoization keeps this factor unchanged. Therefore, to generate an efficient memoization algorithm, it is important to reduce the number of partial solutions. 

\textit{Thinning}, proposed by \citet{DBLP:books/daglib/0096998}, is such an approach. As shown in the following program $g'$, to reduce the number of solutions returned by $g$, thinning inserts a special function $\textit{thin}$ at the return point of $g$, which prunes off non-optimal solutions from those generated by $g$.
\begin{equation}
g' = \lambda g'. \lambda xs. \textit{thin}[R]\ (g\ g'\ xs) \label{equation:thinning}
\end{equation}

Function $\textit{thin}$ is parameterized by a preorder $R$ over the space of solutions. Intuitively, thinning requires preorder $R$ to specify the domination between partial solutions: For any two partial solutions $(p_1, p_2)$, $p_1$ is worse than $p_2$ in the sense of $R$ (written as $p_1Rp_2$) only if partial solution $p_1$ can never lead to the global optimal solution if $p_2$ exists. Given preorder $R$ and a set of partial solutions $P$, $\textit{thin}[R]\ P$ returns one smallest subset of $P$ such that all partial solutions in $P$ are dominated by partial solutions in this subset. According to the requirement to $R$, $\textit{thin}[R]$ removes only those non-optimal partial solutions generated by $g$ and does not affect the final result. 

Let us take 0/1 knapsack and its specification $(g,o)$ introduced in Figure \ref{figure:01spec} as an example. In this case, a partial solution is a sub-list of some suffix of the full item list, and it will be completed to a full solution by inserting some (possibly none) items to its front. Therefore, partial solution $p_1$ is dominated by $p_2$ if $p_1$ not only consumes more capacity but also gains smaller value. At this time, any full solution leaded by $p_1$ can be improved by replacing items in $p_1$ with $p_2$. Such a relation can be described by the following preorder $R$.
\begin{equation*}
p_1Rp_2 \iff (\textit{sumw}\ p_1 \geq \textit{sumw}\ p_2) \wedge (\textit{sumv}\ p_1 \leq \textit{sumv}\ p_2)
\end{equation*}

There are two noticeable properties of $\textit{thin}[R]$ that makes $g'$ (Equation \ref{equation:thinning}) efficient. 
\begin{itemize}
    \item First, the number of partial solutions returned by $\textit{thin}[R]$, which is equal to the number of solutions returned by $g'$, is bounded by the maximum outputs of $\textit{sumw}$, which is at most $W$. 
    \item Second, there is an efficient implementation of $\textit{thin}[R]$ that invokes preorder $R$ only linear times, where the time complexity of $R$ is linear to the size of partial solutions. 
\end{itemize}
Therefore, the time complexity of memoizing $(g', o)$ is only $O(n^3)$, which is exponentially faster than directly memoizing the specification $(g, o)$. For simplicity, we do not go deep into $\textit{thin}[R]$ in this section. A detailed discussion on $\textit{thin}$ can be find in Section \ref{section:thinning}, which shows that the efficiency of $\textit{thin}[R]$ is related to the range of \textit{keys} (in this case, $\textit{sumw}$ and $\textit{sumv}$) involved in $R$.

\noindent \textbf{The Shortages of Thinning}. So far, we have successfully obtained a polynomial-time memoization algorithm for 0/1 knapsack by applying thinning to a plain specification. However, our previous discussion also exposes two crucial shortages of thinning.

First, thinning requires the user to provide a preorder that specifies the domination between partial solutions. However, finding such a preorder is a non-trivial task and may rely on algorithmic intuitions. In our example, to find a proper preorder for 0/1 knapsack, the user needs to recognize and involve the comparison between the consumed capacity. Actually, the difficulty of finding such a comparison is already close to directly proposing the standard dynamic programming algorithm for 0/1 knapsack, which takes the consumed capacity as the state.

Second, applying thinning solely is not enough to achieve an algorithm that is as efficient as the one proposed by human experts. In our example, there is still a gap between the result of thinning, which runs in $O(n^3)$ time, and the standard $O(n^2)$-time dynamic programming algorithm.

\subsection{\mainname: Automating Thinning via Programming-by-Example} \label{section:moti-main}
For the first shortage, to remove the burden from the user, one natural way is to automate the application of thinning. If a proper preorder can be found automatically, thinning can be treated as a black-box, and thus applying it will not consume any user's effort. In this paper, we propose \mainname, which efficiently synthesizes preorders for thinning via programming-by-example.

\smallskip
\noindent \textbf{Specification for the Preorder}. The thinning theorem proposed by \citet{DBLP:books/daglib/0096998} provides a formal characterization for the correctness of thinning. \mainname takes this theory as the specification and synthesizes the preorder from it.

The thinning theorem requires the generator to be specified as a (relational) hylomorphism, a common template for recursions in functional programming. In a nutshell, a hylomorphism specification for the generator comprises two separate functions $\psi$ and $\phi$. 

First, $\psi$ generate a set of \textit{transitions} for a given state, where each transition includes several (possibly none) sub-states for recursion and information used to construct solutions. For example, $\psi$ corresponds to the generator $g$ in Figure \ref{figure:01spec} may return the following three transitions.
\begin{enumerate}
\item An empty transition when the item list $xs$ is empty.
\item A direct recursion to state $(\textit{tail}\ xs)$, representing that the first item is skipped. 
\item A recursion to $(\textit{tail}\ xs)$ with information $(\textit{head}\ xs)$, representing that the first item is chosen.
\end{enumerate}

Second, $\phi$ constructs a set of solutions for each transition and each partial solution of sub-states. For example, the following describes the $\phi$ corresponding to the generator $g$ in Figure \ref{figure:01spec}.
\begin{enumerate}
\item For an empty transition, $\phi$ returns an empty list, representing an empty knapsack.
\item For a direct transition, given a partial solution of the sub-state, $\phi$ directly returns the partial solution, representing that no item is added to the knapsack. 
\item For the last transition, given an item and partial solution of the sub-state, $\phi$ adds the item to the partial solution when the capacity is enough, and otherwise returns noting.
\end{enumerate}

Our approaches inherit the requirement on a hylomorphism-style generator from the thinning theorem. We design a language for implementing such programs in Section \ref{subsection:program} and discuss the effect of this requirement on our approaches in Section \ref{section:discussion}. For simplicity, we leave the formal definition of hylomorphism to Section \ref{section:hylomorphism} and still use functions to demo our approaches in this section.

The main advantage of introducing hylomorphism is that how a partial solution is constructed from the partial solutions of sub-states is explicitly specified via function $\phi$. Concretely, relation $p \twoheadrightarrow_{s} p'$, denoting that a partial solution $p'$ of state $s$ can be constructed from another partial solution $p$, can be extracted from the invocations of $\phi$. $p \twoheadrightarrow_{s} p'$ holds if and only if there is an invocation of $\phi$ in state $s$ that takes $p$ as the input and generates $p'$. 

The following lists all instances of this relation for 0/1 knapsack specified in Figure \ref{figure:01spec} when the state $xs$ is $[(3, 3), (2, 2), (1, 2)]$ and the capacity $W$ is $4$.
\smallskip
\begin{gather}
[] \twoheadrightarrow_{\textit{xs}} [] \quad [] \twoheadrightarrow_{\textit{xs}} [(3, 3)] \quad [(1, 2)]\twoheadrightarrow_{\textit{xs}} [(1, 2)] \quad [(1, 2)] \twoheadrightarrow_{\textit{xs}} [(3, 3), (1, 2)] \nonumber \\
[(2, 2)] \twoheadrightarrow_{\textit{xs}} [(2, 2)] \quad [(2,2), (1, 2)] \twoheadrightarrow_{\textit{xs}} [(2, 2), (1, 2)] \label{example:arrow}
\end{gather}

In our example, to ensure the correctness, i.e., the equivalence between $(g, o)$ (Figure \ref{figure:01spec}) and $(g', o)$ (Equation \ref{equation:thinning}), the thinning theorem requires preorder $R$ to satisfy the following conditions.
\begin{enumerate}
\item The dominance specified by $R$ is monotonic during the recursion. If partial solution $p_1$ is worse than $p_2$ in the sense of $R$, all partial solutions generated by $p_1$ must also be worse than those generated by $p_2$. Concretely, the thinning theorem requires the following condition to hold for any two consecutive states $xs$ and $(\textit{tail}\ xs)$, any two partial solutions $p_1, p_2$ of state $(\textit{tail}\ xs)$, and any partial solution $p_1'$ of state $xs$ such that $p_1 \twoheadrightarrow_{\textit{xs}} p_1'$. 
\begin{equation}p_1Rp_2 \rightarrow \exists p_2' \in (g\ xs), \big(p_2 \twoheadrightarrow_{\textit{xs}} p_2' \wedge p_1'Rp_2'\big)\label{formula:monotonic}\end{equation}
\item Preorder $R$ implies the order of the objective value. If partial solution $p_1$ is worse than $p_2$ in the sense of $R$, the objective value of $p_1$ must be no larger than $p_2$, i.e., $p_1Rp_2 \rightarrow (o\ p_1 \leq o\ p_2)$.
\end{enumerate}

Intuitively, the first condition ensures that $(g'\ xs)$ is always equivalent to $(\textit{thin}[R]\ (g\ xs))$, and the second condition ensures that the optimal solution with the largest objective value is always reserved by thinning. Therefore, they together imply the correctness of thinning.

\smallskip
\noindent \textbf{Extracting Examples for the Preorder}. The thinning theorem provides a specification, and the remaining task for automating thinning is to find a preorder $R$ satisfying both conditions. 

However, directly synthesizing from these conditions is challenging due to the complexity of Formula \ref{formula:monotonic}, which involves both universal quantifiers $\forall$ and $\exists$, and a possibly complex relation $\twoheadrightarrow$ defined on the semantics of the input function $\varphi$. To our knowledge, there is no efficient synthesizer that can handle such a complex specification.

\mainname uses the framework of \textit{programming-by-example (PBE)}~\cite{DBLP:conf/ijcai/ShawWG75} to resolve this challenge. Given a logic specification, a typical PBE solver first substitutes concrete values into the formula, extracts constraints on \textit{concrete} invocation of the target program (denoted as examples), and then synthesizes from the examples. In this way, the core synthesizer does not have to handle complex logic specifications, and thus the difficulty of synthesis is greatly reduced. 
\smallskip

Now, we outline how \mainname extracts simple examples for $R$ from the two conditions. 

First, \mainname ensures the second condition by limiting the form of $R$ to be $\lambda p_1. \lambda p_2. (o\ p_1 \leq o\ p_2) \wedge p_1 \red{?R} p_2$, where $\red{?R}$ is a preorder to be synthesized, and thus considers only Formula \ref{formula:monotonic}.

Then, \mainname considering the following formula that is equivalent to Formula \ref{formula:monotonic} for any $p_1, p_2$. 
\begin{equation}\exists p_1' \in (g\ xs), \bigg(p_1 \twoheadrightarrow_{xs} p_1' \wedge  \forall p_2' \in (g\ xs), \big(\neg p_2 \twoheadrightarrow_{\textit{xs}} p_2' \vee \neg p_1'Rp_2'\big)\bigg) \rightarrow \neg p_1Rp_2 \label{formula:eq-mono}\end{equation}

The conclusion of this formula, $\neg p_1Rp_2$, is extremely simple. If the premise can be transformed to be irrelevant to the unknown preorder $R$, we can extract example $(p_1, p_2)$ specifying $\neg p_1Rp_2$ by constantly substituting $p_1, p_2$ with concrete partial solutions until the premise is satisfied. Compared to Formula \ref{formula:monotonic}, the constraint provided by this example involves neither universal quantifiers nor $\twoheadrightarrow$, and thus makes it possible to design an efficient synthesizer for $R$. 

The transformation of Formula \ref{formula:eq-mono} is motivated by the fixed form of $R$ for ensuring the first condition, which implies that a part of $R$ is known while synthesis. Therefore, we substitute $R$ in the conclusion of Formula \ref{formula:eq-mono} with $\lambda p_1. \lambda p_2. p_1R'p_2 \wedge p_1\red{?R}p_2$, where $R'$ and $\red{?R}$ represent the known preorder and unknown preorder in $R$ respectively, and obtain the following equivalent formula.
\begin{align*}\bigg(p_1 R' p_2 \wedge \exists p_1' \in (g\ xs), \bigg(p_1 \twoheadrightarrow_{\textit{xs}} p_1' \wedge  \forall p_2' \in (g\ xs), \big(\neg p_2 \twoheadrightarrow_{\textit{xs}} p_2' \vee \neg p_1'Rp_2'\big)\bigg)\bigg) \rightarrow \neg p_1\red{?R}p_2
\end{align*}

Because $\neg p_1' R' p_2$ implies $\neg p_1' R p_2'$, a \textit{weaker} formula whose premise is irrelevant to $\red{?R}$ can be obtained by replacing $\neg p_1' R p_2'$ in the premise with the unknown comparison $\neg p_1' R' p_2$.
\begin{equation}\bigg(p_1 R' p_2 \wedge \exists p_1' \in (g\ xs), \bigg(p_1 \twoheadrightarrow_{\textit{xs}} p_1' \wedge  \forall p_2' \in (g\ xs), \big(\neg p_2 \twoheadrightarrow_{\textit{xs}} p_2' \vee \neg p_1'R'p_2'\big)\bigg)\bigg) \rightarrow \neg p_1\red{?R}p_2 \label{formula:example}
\end{equation}

We use an example to show how examples are extracted from Formula \ref{formula:example}. In 0/1 knapsack specified by Figure \ref{figure:01spec}, suppose state $xs$ is $[(3, 3), (2, 2), (1, 2)]$, capacity $W$ is $4$, and the known part $R'$ is $\lambda p_1. \lambda p_2. \textit{sumv}\ p_1 \leq \textit{sumv}\ p_2$. At this time, the domain of $p_1, p_2$ is $\{[], [(1, 2)], [(2, 2)], [(2, 2), (1, 2)]\}$ and the relation $\twoheadrightarrow_{\textit{xs}}$ is described in Example \ref{example:arrow}. 
\begin{itemize}
\item When $p_1$ and $p_2$ are taken as $[(1, 2)]$ and $[(2, 2)]$, the premise of Formula \ref{formula:example} is true because (1) $\textit{sumv}\ p_1 \leq \textit{sumv}\ p_2$, (2) $p_1 \twoheadrightarrow_{\textit{xs}} p_1' = [(3, 3), (1, 2)]$, and $\textit{sumv}\ p_1' > \textit{sumv}\ [(2, 2)]$, which is only choice of $p_2'$ satisfying $p_2 \twoheadrightarrow_{\textit{xs}} p_2'$. Therefore, example $\neg [(1, 2)] \red{?R} [(2, 2)]$ is obtained.
\item Similarly, another example can be obtained by taking $p_1$ and $p_2$ as $[(1, 2)]$ and $[(2, 2), (1, 2)]$.
\end{itemize}
Given these two examples, $\red{?R} = \lambda p_1. \lambda p_2. \textit{sumw}\ p_1 \geq \textit{sumw}\ p_2$ is a valid solution satisfying both examples, which leads to the intended preorder for 0/1 knapsack. 

\smallskip
Though Formula \ref{formula:example} is not equivalent to the original specification, i.e., a preorder satisfying Formula \ref{formula:example} may not be valid for Formula \ref{formula:monotonic}, the following fact makes it useful for synthesizing $R$. 
\begin{itemize}
\item Given preorder $R'$, if no examples for $\red{?R}$ can be extracted from Formula \ref{formula:example}, or in other word, the premise of Formula \ref{formula:example} is constantly true, $R'$ must be a valid for Formula \ref{formula:monotonic}. 
\end{itemize}

This fact suggests an iterative framework for synthesizing $R$. Starting from $R' = \lambda p_1. \lambda p_2. o\ p_1 \leq o\ p_2$, a preorder satisfying Formula \ref{formula:monotonic} can be synthesized in three steps. 
\begin{enumerate}
\item Extract examples for $\red{?R}$ from Formula \ref{formula:example} and a set of concrete instances of the COP task.
\item If no examples are obtained, return $R = R'$ as the synthesis result.
\item Synthesize a valid preorder $\red{?R}$ from the examples, update $R'$ with $\lambda p_1. \lambda p_2. p_1R'p_2 \wedge p_1 \red{?R} p_2$, and then go back to Step 1.
\end{enumerate}

\smallskip
\noindent \textbf{Synthesizing a Preorder from Example.} The remaining task for \mainname is to synthesize a preorder $\red{?R}$ satisfying a set of negative examples $(a_i, b_i)$ such that $\neg (a_i \red{?R} b_i)$ holds. Besides, to generate an efficient memoization algorithm, another goal of the synthesis is to minimize the time complexity of the memoization algorithm generated by thinning.

To enable an efficient synthesis algorithm, \mainname synthesizes $\red{?R}$ in the following form, which is a conjunction of comparisons related to several key functions. 
$$
a\red{?R}b \iff \wedge_{i} (\red{?key_i}\ a)\ \red{?op_i}\ (\red{?key_i}\ b)
$$
where $\red{?key_i}$ is a function mapping a partial solution to an integer and $\red{?op_i}$ is an operator in $\{\leq, =, \geq \}$. In our implementation, $\red{?key_i}$ is from a grammar including common arithmetic operators and operators for lists and binary trees. More details on this grammar can be found in Section \ref{section:implementation}. 

This form of preorder has two main advantages. First, a preorder in this form can be naturally decomposed to several comparisons $(\red{?op_i}, \red{?key_i})$, where the scale of each comparison is smaller than the preorder. Moreover, in the terms of satisfying the given negative examples, these comparisons are independent of each other: preorder $\red{?R}$ satisfies a negative example if and only if some comparison in $\red{?R}$ is violated on the example. This property makes it possible to synthesize each comparison separately and thus greatly reduce the scale of the synthesis task. 

Second, we prove that the efficiency of the algorithm generated by thinning with a preorder in this form can be estimated by the production of the ranges of all $\red{?key_i}$. Such estimation is monotonic while including more comparisons to the preorder, and thus is easy to optimize in a search procedure. More details on this estimation can be found in Section \ref{section:thinning}.

\smallskip
Motivated by both properties, \mainname regards a preorder as a list of comparisons and synthesizes the list incrementally. Starting from an empty list, in each turn, \mainname finds a comparison that is violated on a large enough subset of unsatisfied examples and inserts it to the preorder. The iteration proceeds until all examples are satisfied.

To efficiently find an effective preorder, \mainname makes two changes on this basic iteration.
\begin{itemize}
\item As mentioned before, in Section \ref{section:thinning}, we prove that the effectiveness of a preorder for thinning can be estimated by the ranges of the involved key functions. To find an effective preorder, \mainname backtracks on the iteration and uses branch-and-bound, a standard search technique, to optimize the objective function provided by our estimation.
\item To restrain the search space, \mainname uses an outermost iteration on two parameters: (1) a size limit $n_s$ for comparisons, and (2) a number limit $n_c$ for comparisons used in the preorder. While choosing the $i$th comparison for $\red{?R}$, only those comparisons that (1) are smaller than $n_s$, and (2) are violated on at least $1/(n_c - i + 1)$ portions of examples are considered. In this way, the search space of preorders is greatly reduced.  
\end{itemize}

We use an example to show the search procedure of \mainname. Suppose the given negative examples are $([(1, 2)], [(2, 2)])$ and $([(1, 2)], [(1, 2), (2, 2)])$ extracted in the previous example, and there are only three comparisons $c_1 = (\geq, \lambda p. |p|)$, $c_2 = (\geq, \lambda p. \textit{sumw}\ p)$ and $c_3 = (\leq, \lambda p. \textit{sumv}\ p)$ that are smaller than $n_s$. These three comparisons are violated on $1, 2$, and $0$ examples respectively, and the ranges\footnote{The range here is defined as $\textit{ma}-\textit{mi}+1$, where $\textit{ma}$ and $\textit{mi}$ are the maximum and the minimum outputs on the examples.} of their key functions on these two examples are $2, 3,$ and $4$ respectively. For simplicity, we assume the objective function is exactly the product of ranges of involved key functions.
\begin{itemize}
    \item When $n_c$ is set to $1$, \mainname considers comparisons that are violated on at least $2 / (n_c - 1 + 1) = 2$ examples. At this time, $c_2$ is the only choice and thus \mainname returns $[c_2]$ as the result.
    \item When $n_c$ is set to $2$, \mainname considers comparisons that are violated on at least $2 / (n_c - 1 + 1) = 1$ example. At this time, there are two choices $c_1$ and $c_2$. First, because $[c_2]$ satisfies all examples, \mainname updates the upper bound to its objective value, which is $3$. 
    
    Then, because $c_1$ is violated only on the second example, \mainname continues to find a preorder satisfying the first example. By branch-and-bound, at this time, \mainname only considers preorders with an objective value smaller than $3 / 2 = 1.5$. Because there is no comparison with a range smaller than $1.5$, \mainname returns immediately and thus takes $[c_2]$ as the result.
\end{itemize}

\newcommand{\nk}{\ensuremath{n_{\textit{keys}}}\xspace}
\newcommand{\szi}{\ensuremath{s_{\textit{st}}}\xspace}
\newcommand{\nf}{\ensuremath{n_{\textit{frags}}(\szi)}\xspace}
\newcommand{\ns}{\ensuremath{n_{\textit{sols}}}\xspace}
\newcommand{\szf}{\ensuremath{s_{\textit{frag}}}\xspace}
\newcommand{\szs}{\ensuremath{s_{\textit{sol}}}\xspace}
\newcommand{\Ta}{\ensuremath{T_{\textit{ana}}}\xspace}
\newcommand{\Tc}{\ensuremath{T_{\textit{cata}}}\xspace}  

\subsection{\plusname: Improving Thinning via Three Supplementary Rules}
For the second shortage, to improve the result of thinning, we analyze the factors that affect the performance of memoizing generator $g$ in Figure \ref{figure:01spec}. The time complexity is as follows.
\[
    O\left(\nk \left(\szi+ \ns \szs\right)\right)
\] 
Here \nk denotes the number of keys that key function $m$ (introduced in Figure \ref{figure:memoization}) could possibly return, \ns denotes the maximum number of partial solutions returned by a recursive call, \szi denotes the size of the search state, and \szs denote the size of a partial solution.

The execution time of memoizing $g$ is the product of the number of invocations and the execution time of a single invocation excluding the recursive call. The former is further confined by \nk. The latter consists of the execution time of processing the input and the execution time of producing the solution, where there are $\ns$ solutions, and each takes $O(\szs)$ to process. Though this analysis is specific to our example, we prove that in the general case, with some assumptions, optimizing these four factors is enough to generate an efficient memoization algorithm (Theorem \ref{theorem:efficiency}).

The main shortage of thinning is that it focuses only on $n_{\textit{sols}}$ while remaining the other three factors unchanged. In our example, though thinning optimizes $n_{\textit{sols}}$ from $O(2^n)$ to $O(n)$, both $s_{\textit{st}}$ and $s_{\textit{sol}}$ remains $O(n)$ in the resulting program and leads to the gap between the result of thinning, which runs in $O(n^3)$-time, and the standard dynamic programming algorithm for 0/1 knapsack.

Motivated by the above analysis, we propose three supplementary rules to optimize the other three factors, and propose solver \plusname, which automates and integrates these rules into \mainname. For 0/1 knapsack, our rules reduces $s_{\textit{sol}}$ and $s_{\textit{st}}$ to $O(1)$, keeps $n_{\textit{keys}}$ unchanged as $O(n)$, and thus \plusname can automatically generate an $O(n^2)$-time memoization algorithm. 

The procedure of applying these rules is listed in Figure \ref{fig:overview}. For each rule, \textit{Initial Program} shows the input for each rule where $(g_1, o_1, m_1)$ is the program generated by thinning, \textit{Intermediate Program} shows the transformation result where red variables represent the unknown functions required by the rule, \textit{Examples} lists concrete examples for the unknown functions extracted from $xs = [(3, 3), (2, 2), (1, 2)]$ and $W = 4$, and \textit{Synthesis Result} shows the functions found by \plusname.

\begin{figure}[!t]
    \centering
    \includegraphics[width=1\textwidth]{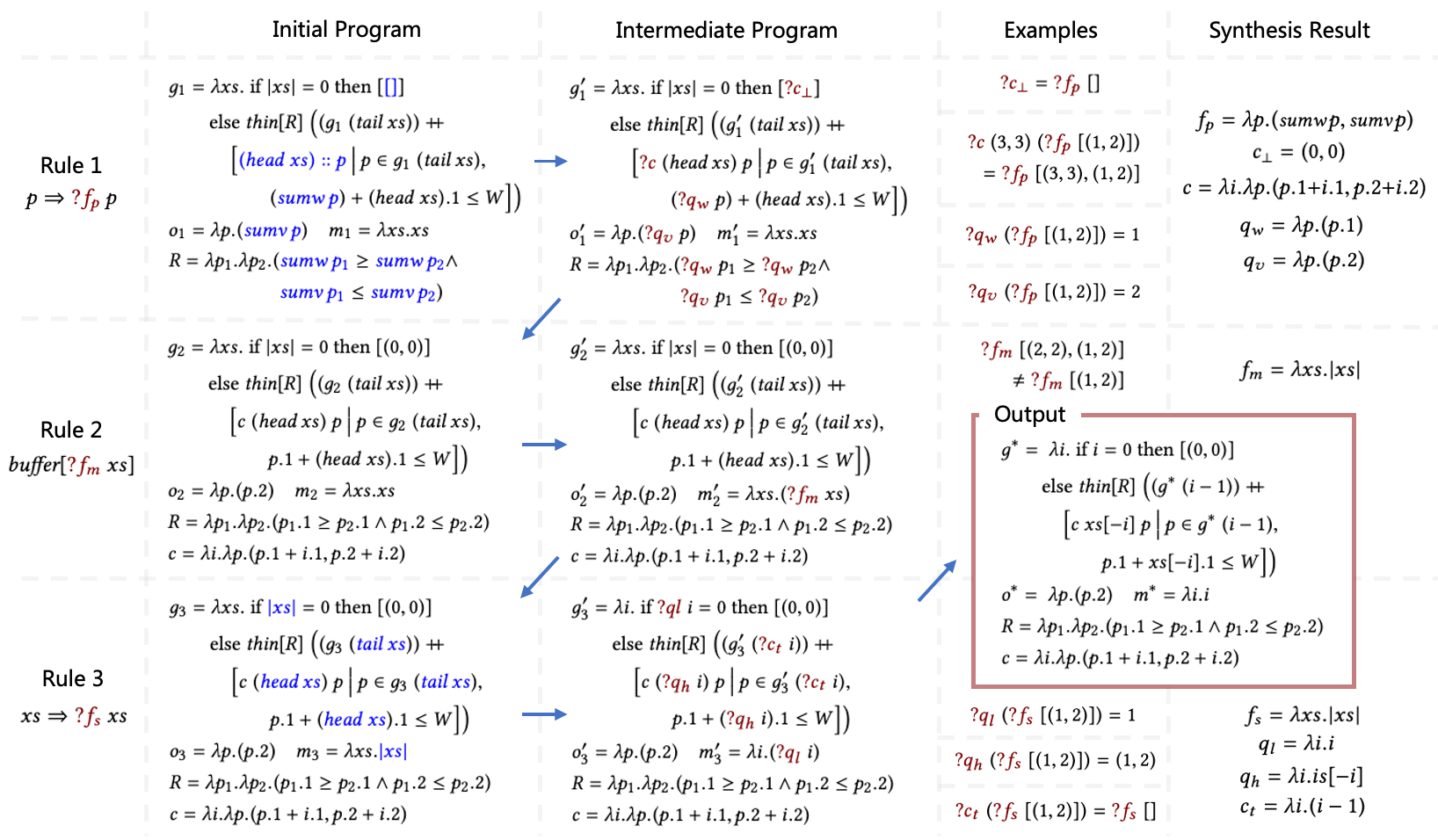}
    \caption{The procedure of applying the three supplementary rules to improve the result of thinning for 0/1 knapsack. For simplicity, we omit the first parameter of $g$ in this Figure.}
    \label{fig:overview}
\end{figure}

\smallskip
\noindent \textbf{Rule 1}. The first rule optimizes \szs, the size of partial solutions. Note that the list representation of \textit{thinning} is unnecessarily complex in $(g_1, o_1, m_1)$. To run this program, only the weight sum and the value sum of each partial solution matter. Therefore, the main idea of this rule here is to replace the representation of a solution from a list with a more compact representation, which includes only necessary information for calculating the weight sum and the value sum\footnote{Note that after changing the representation, the optimal solution in the original form can still be extracted from the optimized program. In a nutshell, one can trace back the calculation that leads to the optimal solution in the optimized program and recover the optimal solution in the original by repeating the calculation in the input program. This is a standard technique, and thus we omit it throughout this paper.}.

Rule 1 uses a \emph{converting function} $\red{?f_p}$ to convert the representation of partial solutions. It constructs an intermediate program such that each partial solution $p$ generated by the input program is also generated by the intermediate one as the output of $\red{?f_p}$. In this procedure, because the type of partial solutions is changed, those functions in the input program that access partial solutions (either takes a solution as an input or constructs a solution as the output) should be replaced correspondingly. 

In Figure \ref{fig:overview}, we mark the four functions that access partial solutions in $(g_1, m_1, o_1)$ as \blue{blue}, where $[]$ and $\lambda x. \lambda p. (x :: p)$ construct partial solutions, and $\lambda p. (\textit{sumw}\ p)$ and $\lambda p. (\textit{sumv}\ p)$ extracts information from partial solutions. Rule 1 replaces them with unknown functions $\red{?c_{\bot}}$, $\red{?c}$, $\red{?q_w}$ and $\red{?q_v}$ respectively and constructs the intermediate program $(g_1', o_1', m_1')$. In Section \ref{subsection:rule1}, we show that this transformation can be done by traversing on the AST of the hylomorphism.

\smallskip
\plusname completes the intermediate program by synthesizing functions $\red{?f_p}$, $\red{?c_{\bot}}$, $\red{?c}$, $\red{?q_w}$ and $\red{?q_v}$ via programming-by-example. To extract examples for these functions, \plusname utilizes the correspondence between the executions of the input program and the intermediate program. Given a concrete instance of 0/1 knapsack, \plusname traces the execution of the input program $(g_1, o_1, m_1)$. Each time when a solution-related function is invoked, there must be an invocation of the corresponding unknown function where each involved partial solution $p$ is replaced with the new representation $\red{?f_p}\ p$. Such an invocation is recorded as an example for synthesis.

For example, let us consider the invocation of $g_1$ with state $xs =[(3, 3), (2, 2), (1, 2)]$ and capacity $W=4$. We highlight two invocation of solution-related functions as follows.
\begin{itemize}
    \item On partial solution $[(1, 2)] \in g_1\ (\textit{tail}\ xs)$, $\lambda p. (\textit{sumv}\ p)$ is invoked, and the result is $1$. By the correspondence, there should be an invocation of $\red{?q_w}$ on the new representation of $[(1, 2)]$, i.e., $\red{?f_p}\ [(1, 2)]$, that outputs $1$. Therefore, example $\red{?q_w}\ (\red{?f_p}\ [(1, 2)]) = 1$ is obtained.
    \item On item $(3, 3)$ and partial solution $[(1, 2)]$, $\lambda x. \lambda p. (x :: p)$ is invoked, and a new partial solution $[(3, 3), (1, 2)]$ is constructed. Therefore, according to the correspondence, example $\red{?c}\ (3, 3)\ (\red{?f_p}\ [(1, 2)]) = \red{?f_p}\ [(3, 3), (1, 2)]$ is obtained.
\end{itemize}
Several extracted examples for the other functions can be found in Figure \ref{fig:overview}.

\smallskip
The remaining task for applying Rule 1 is to synthesize from the following specification, where $E_c, E_w$, and $E_v$ are the sets of extracted examples.
\begin{gather}
    \red{?c_{\bot}} = \red{?f_p}\ [] \qquad \forall (x, p) \in E_c, \red{?c} \ x\ (\red{?f_p}) = \red{?f_p}\ (x :: p) \nonumber\\
    \forall p \in E_w, \red{?q_w}\ (\red{?f_p}\ p) = \textit{sumw}\ p \qquad \forall p \in E_v, \red{?q_v}\ (\red{?f_p}\ p) = \textit{sumv}\ p \label{example:rule1-spec}
\end{gather}

\plusname reduces this task to \textit{lifting problem} and \textit{partial lifting problem}, two kinds of synthesis tasks studied by \citet{lifting}, and synthesizes by invoking an efficient synthesizer \textit{AutoLifter} for these tasks. Because the formal definitions of these tasks involve concepts in the category theory, we leave them to Section \ref{subsection:lifting} and use two examples to show the reduction made by \plusname.
\begin{itemize}
    \item Given functions $q, h$, $m = \lambda x. \lambda p. (x::p)$, and a set $E$ of examples, synthesizing $\red{?f}$ and $\red{?c}$ from the following equation is an instance $\mathsf{LP}(\{m\}, q, h, E)$ of the lifting problem.
    $$\forall (x, p) \in E, (q\ (m\ x\ p), \red{?f}\ (m\ x\ p)) = \red{?c}\ x\ (h\ p, \red{?f}\ p) 
    $$
    Clearly, the second formula in Example \ref{example:rule1-spec} is equivalent to $\mathsf{LP}(\{\lambda x. \lambda p. (x::p)\}, \textit{null}, \textit{null}, E_c)$, where $\textit{null}$ represents a dummy program returning nothing.
    \item Given functions $q, h$, $m = \lambda p. p$, and a set $E$ of examples, synthesizing $\red{?f}$ and $\red{?c}$ from the following equation is an instance $\mathsf{PLP}(\{m\}, q, h, E)$ of the partial lifting problem.
    $$\forall (x, p) \in E, q\ (m\ x\ p) = \red{?c}\ x\ (h\ p, \red{?f}\ p) 
    $$
    Clearly, third formula in Example \ref{example:rule1-spec} is equivalent to $\mathsf{PLP}(\{\lambda p. p\}, \lambda p. \textit{sumw}\ p, \textit{null}, E_w)$. 
\end{itemize}
Because in the task of applying Rule 1, function $\red{?f_p}$ is shared in all specifications, there are some details remaining on merging the results of the reduced tasks, which are left to Section \ref{subsection:rule1}.

\textit{AutoLifter} uses grammars to guarantee the efficiency of the synthesis results. Under its default setting, $\red{?f_p}$ is synthesized from a grammar including only polynomial-time programs that output only tuples of scalar values, and other functions are synthesized from a grammar including only constant-time operators for scalar values. In this way, the time complexities of all synthesis results except $\red{?f_p}$ are guaranteed to be $O(1)$. For Rule 1, because $\red{?f_p}$ is never invoked in the optimized program, such a guarantee provided by \textit{AutoLifter} already ensures the efficiency of the result.

\noindent \textbf{Rule 2}. The second rule optimizes \nk, the number of keys that $m$ possibly returns. Though the trivial function $m = \lambda xs. xs$ is already efficient in our example, in general, the search state may record too much information such that reusing results only for exactly the same search state is inefficient. An example of this case can be found in Section \ref{section:hylomorphism}, which is shown as Figure \ref{fig:p2}. To improve this point, Rule 2 replaces the original key function with a more compact one $\red{?f_m}$ and thus lets the memoized results be reused between different search states. 

To automatically applying Rule 2, \plusname synthesizes $\red{?f_m}$ from examples. To ensure the correctness of the transformation result (in our example, $(g_2', o_2', m_2')$ in Figure \ref{fig:overview}), \plusname requires $\red{?f_m}$ to assign different keys to search states with different outputs. At this time, negative example $(a,b)$ requiring that $\red{?f_m}\ a \neq \red{?f_m}\ b$ can be extracted from the execution of the input program.

For example, let us consider the invocation of $g_2$ in Figure \ref{fig:overview} with item list $xs = [(3, 3), (2, 2), (1, 2)]$ and capacity. All invocations of $g_2$ and their results are listed as follows.
\begin{gather*}
g_2\ [] = [(0, 0)] \qquad g_2\ [(1, 2)] = [(0, 0), (1, 2)] \qquad g_2\ [(2, 2), (1, 2)] = [(0,0), (1, 2), (3, 4)] \\
g_2\ [(3, 3), (2, 2), (1, 2)] = [(0,0), (1, 2), (3, 4), (4, 5)]
\end{gather*}
Because the outputs of $g_2$ are pairwise different on the four states, the outputs of $\red{?f_m}$ must be pairwise different on $[], [(1,2)], [(2, 2), (1, 2)],$ and $[(3, 3), (2, 2), (1, 2)]$, which leads to $6$ negative examples in the form of $\red{?f_m}\ a \neq \red{?f_m}\ b$ for function $\red{?f_m}$.

\smallskip 
To focus on effective candidates of $\red{?f_m}$, \plusname considers only those functions compressed the search state to a tuple of scalar values. At this time, $\red{?f_m}$ can be regarded as a tuple of key functions.
$$
\red{?f_m}\ s = (\red{?key_1}\ s, \dots, \red{?key_n}\ s)
$$

\plusname synthesizes $\red{?f_m}$ from examples by reducing it to the synthesis task for thinning in \mainname. Note that key function $\red{?f_m}$ can be regarded as an equivalence relation $\red{?R_m}$ over search states, where $a\red{?R_m}b$ is defined as $\red{?f_m}\ a = \red{?f_m}\ b$. The following shows the expanded form of $\red{?R_m}$.
$$
a \red{?R_m} b \iff \wedge_i (\red{?key_i}\ a) = (\red{?key_i}\ b)
$$

First, the form of $\red{?R_m}$ matches the form of preorders considered by \mainname. Second, the number of different keys returned by $\red{?f_m}$ is bounded by the product of the ranges of key functions, which matches the objective function used by \mainname. Therefore, an efficient $\red{?f_m}$ can be directly synthesized by invoking the solver in \mainname.

For program $(g_2', o_2', m_2')$, \mainname synthesizes $\red{?f_m}$ as $\lambda xs. |xs|$. Though the value \nk does not change, the synthesized key function simplifies the information on search states required for memorization, and thus make it possible to simplify the representation of search states later.

\smallskip
\noindent \textbf{Rule 3}. The step optimizes \szi, the size of search states. The procedure of applying this rule is almost the same with Rule 2. First, \plusname uses a converting function $\red{?f_s}$ to convert the representation of search states, and generates an intermediate program $(g_3', o_3', m_3')$ by replacing all state-related functions $(g_3, o_3, m_3)$ with unknown functions. Second, \plusname extracts examples by tracing the execution of $(g_3, o_3, m_3)$ on concrete instances of 0/1 knapsack. Last, \plusname synthesizes unknown functions from examples by invoking \textit{AutoLifter}.

For program $(g_3', o_3', m_3')$, \plusname synthesizes $\red{?f_s}$ as $\lambda xs. |xs|$ and uses the length of the item list to represent a search state. In this way, $\szi$ is reduced from $O(n)$ to $O(1)$.

\smallskip
\noindent \textbf{Result}. The result of \plusname for 0/1 knapsack is shown as $(g^*, o^*, m^*)$ in Figure \ref{fig:overview}. In this program, $\szi=\szs=O(1)$, $\nk=\ns=O(n)$, and thus the time-complexity is reduced to $O(n^2)$.
\section{Preliminaries} \label{section:background}
To operate functions, the following four operators $\circ, +, \times$, and $\triangle$ will be used in our paper.
\begin{gather*}
    (f_1 \circ f_2)\ x\coloneqq f_1\ (f_2\ x) \quad (f_1 + f_2)\ (i, x) \coloneqq f_i\ x, i \in \{1, 2\} \\
    (f_1 \times f_2)\ (x, y) \coloneqq (f_1\ x, f_2\ y) \quad (f_1 \triangle f_2) \ x\coloneqq (f_1\ x, f_2\ x)
\end{gather*}
\subsection{Categorical Functors}
\textit{Functor} is an important concept in category theory. A category consists of a set of objects, denoted by uppercase letters such as $A$, and a set of arrows between objects, denoted by lowercase letters such as $f$. In this paper, we focus on category \textbf{Fun}, where an object is a set and an arrow from object $A$ to object $B$ is a total function from set $A$ to set $B$. 
In a category, a functor $\F$ maps objects to objects, arrows to arrows, and keeps identity and composition. 
$$
\F id_A = id_{\F_A} \qquad \F(f \circ g) = \F f \circ \F g
$$
where $\textit{id}_A$ represents the identity function on set $A$. Intuitively, a functor can be regarded as a higher-order function, which constructs new functions from existing functions.

A functor is a \textit{polynomial functor} if it is constructed by identity functor $\mathsf{I}$, constant functors $!A$, and bifunctors $\times, +$. Their definitions are shown below.
\begin{gather*}
    \mathsf{I}A \coloneqq A\ \ \ \mathsf{I}f \coloneqq f\quad (!A)B \coloneqq A\ \ \ (!A)f \coloneqq id_A \quad 
    (\F_1 \times \F_2)A \coloneqq \F_1A \times \F_2A \ \ \ (\F_1 \times \F_2)f \coloneqq \F_1f \times \F_2 f \\
    (\F_1 + \F_2)A \coloneqq (\{1\} \times \F_1A) \cup(\{2\} \times \F_2A) \ \ \
    (\F_1 + \F_2)f \coloneqq \F_1 f + \F_2 f
\end{gather*}
where $A \times B$ represents the Cartesian product of objects $A$ and $B$.

In this paper, when using symbol $\F$, we inherently assume that $\F$ is a polynomial functor. Besides, we also use the power functor $\mathsf{P}$ to express operations related to power sets.
\begin{align*}
    \mathsf P A \coloneqq \{s\ |\ s \subseteq A\} \quad \mathsf P f\ s \coloneqq \{f\ a\ |\ a \in s\}
\end{align*}

\subsection{Generator, Memoization, and Hylomorphism} \label{section:hylomorphism}
The concept of \textit{relational hylomorphism} is originally defined on another category namely \textbf{Rel}. For simplicity, in this paper, we introduce it as its simplified counterpart in category \textbf{Fun}.

\begin{definition}[Recursive Generator] \label{def:rec} Given arrows $\phi: \mathsf {PF}A \rightarrow \mathsf P A$ and $\psi: B \rightarrow \mathsf P\F B$, recursive generator $rg(\phi, \psi)_{\F}: A \rightarrow \mathsf PB$ is the smallest solution of the following equation, where arrow $r_1$ is smaller than $r_2$ if $\forall i, r_1\ i \subseteq r_2\ i$.
    $$
    rg(\phi, \psi)_{\F} = \phi \circ \textit{cup} \circ \mathsf P (\textit{car}[\mathsf F] \circ \mathsf {F}rg(\phi, \psi)_{\F}) \circ \psi
    $$
In this equation, $\textit{cup}: \mathsf{PP}A \rightarrow \mathsf{P}A$ is defined as $\textit{cup}\ x \coloneqq \cup_{s \in x} s$, which unions all sets in a set of sets, and $\textit{car}[\F]: \mathsf{FP}A \rightarrow \mathsf{PF}A$ is defined as the following, which is similar to the Cartesian product.
\begin{gather*}
\textit{car}[\mathsf I]\ x \coloneqq x \quad
\textit{car}[\F_1 \times \F_2]\ (x_1, x_2) \coloneqq (\textit{car}[\F_1]\ x_1) \times (\textit{car}[\F_2]\ x_2) \\
\textit{car}[!A]\ x \coloneqq x \quad\textit{car}[\F_1 + \F_2]\ (i, x) \coloneqq \{(i, a)\ |\ a \in x\}
\end{gather*}
\end{definition}

The concept of recursive generators corresponds to recursive programs in \textbf{Rel}. As discussed in Section \ref{section:overview}, a recursive generator can be naturally memoized via a key function $m$. In the remainder of this paper, we use $r^m$ to denote the result of memoization. 

In this paper, recursive generator $rg(\phi, \psi)_{\F}$ is invoked to generate valid solutions for a given COP. Similar to the discussion in Section \ref{section:overview}, we denote the input of $rg(\phi, \psi)_{\F}$ as a \textit{search state}, an element in the output of $rg(\phi, \psi)_{\F}$ as a \textit{solution}, and an element in the output of some invocation involved in the recursive definition as \textit{partial solution}.

\begin{definition} [Relational Hylomorphism]\label{def:hylo} Given two arrows $\phi: \mathsf F A\rightarrow \mathsf P A$ and $\psi: B \rightarrow \mathsf P\F B$, relational hylomorphism $\hylo{\phi, \psi}_{\F}: B \rightarrow \mathsf P A$ is defined as $rg(\textit{cup} \circ \mathsf P\phi, \psi)$.
\end{definition}

Compared to a general recursive generator, a hylomorphism assumes the independence while constructing each solution. For hylomorphism $\hylo{\phi, \psi}_{\F}$, given the set including all sub-results of the recursions, i.e., $(\textit{cup} \circ \mathsf P (\textit{car}[\mathsf F] \circ \mathsf {F} \hylo{\phi, \psi}_{\F}) \circ \psi)$, $\hylo{\phi, \psi}_{\F}$ independently constructs solutions for each result via $\mathsf P \phi$, and then merges all solutions via $\textit{cup}$.

\begin{figure*}[t]
    \begin{minipage}{.4\linewidth}
      \centering
          {\small 
          \begin{align*}
          &\textit{prog} = (\hylo{\phi, \psi}_{\mathsf F}, o), \text{ where }\\
          &\quad\F = !\texttt{Unit} + \mathsf I +  ((!\texttt{Int} \times !\texttt{Int}) \times \mathsf I) \\
          &\quad \psi = \lambda xs. \text{ if }|xs| = 0\text{ then collect } (1, \textit{unit}); \\
          &\quad \quad \text{else }\{\text{collect }(2, \textit{tail}\ xs);\\\
          & \quad \quad \quad \text{collect } (3, (\textit{head}\ xs, \textit{tail}\ xs));\}\\
          &\quad \phi = \lambda (\textit{tag}, p). \text{ if }\textit{tag} = 1\text{ then collect } []; \\
          &\quad \quad \text{else if } tag = 2 \text{ then }\text{collect }p.2; \\
          &\quad \quad \text{else if }(\textit{sumw}\ p.2) + p.1.1 \leq W\text{ then} \\
          & \quad \quad \quad \quad \text{collect } (p.1 :: p.2); \\
          &\quad o = \lambda p. \textit{sumv}\ p
          \end{align*}
          }
          \vspace{-17pt}
          \caption{The program corresponding to $(g, o)$.}
          \label{fig:p1}
          {\small 
          \begin{align*}
          &\textit{prog}_2 = (\hylo{\phi, \psi}_{\mathsf F}, o), \text{ where }\\
          &\quad\F = !\texttt{Unit} + ((!\texttt{Int} \times !\texttt{Int}) \times \mathsf I) \\
          &\quad \psi = \lambda xs. \text{ if }|xs| = 0\text{ then collect } (1, \textit{unit}); \\
          &\quad \quad \text{else collect } (2, (\textit{head}\ xs, \textit{tail}\ xs)); \\
          &\quad \phi = \lambda (\textit{tag}, p). \text{ if }\textit{tag} = 1\text{ then collect } []; \\
          &\quad \quad \text{else }\{\text{collect }(p.1::p.2);\ \text{collect }p.2;\} \\
          & \quad o = \lambda p. (\textit{sumw}\ p \leq W)\ ?\ (\textit{sumv}\ p) :-\infty;
          \end{align*}
          }
          \vspace{-17pt}
          \caption{A valid program for 0/1 knapsack in $\mathcal L_H$. In this program, $-\infty$ is a small enough integer used to exclude invalid programs.}
          \label{fig:p3}
    \end{minipage} \quad
      \begin{minipage}{.48\linewidth}
        \centering
            {\small
                \begin{tabular}{cccl} 
                    \toprule 
                    Program& $\mathbb P$& $\rightarrow$ & $(\mathbb H, \mathbb E)$ \\
                    Hylomorphism & $\mathbb H$ & $\rightarrow $ & $\hylo{\lambda x. \mathbb S, \lambda x. \mathbb S}_{\mathbb F}$ \\
                    Functor& $\mathbb F$& $\rightarrow$& $\mathsf I\ |\ !\texttt{Unit}\ |\ !\texttt{Int}\ |\ \mathbb F \times \mathbb F$\\ 
                    & & $|$ & $\mathbb F + \mathbb F $ \\
                    Statement & $\mathbb S$ & $\rightarrow$ & $\text{skip};\ |\ \mathbb S\ \mathbb S \ |\ \text{collect }\mathbb E;$ \\ 
                    & & $|$ & $\text{if }\mathbb E\text{ then }\mathbb S\text{ else }\mathbb S$ \\
                    & & $|$ & $\text{foreach }x \in [\mathbb E, \mathbb E] \text{ in }\mathbb S$ \\
                    Expression & $\mathbb E$ & $\rightarrow$ & $x \ |\ \text{const}\ |\ \lambda x. \mathbb E $ \\
                    & & $|$ & $\oplus\ \mathbb E\dots\mathbb E$\\
                    \bottomrule
            \end{tabular}}
            \caption{The syntax of language $\mathcal L_H$, where $x$ represents a variable, $\oplus$ represents a black-box operator.} 
            \label{fig:syntax}
            \vspace{-7pt}
            \centering
            {\small 
            \begin{align*}
            &\textit{prog}_3 = (\hylo{\phi, \psi}_{\mathsf F}, o), \text{ where }\\
            &\quad\F = !\texttt{Unit} + \mathsf I + ((!\texttt{Int} \times !\texttt{Int}) \times \mathsf I) \\
            &\quad \psi = \lambda (xs, p). \text{ if }|xs| = 0\text{ then collect } (1, \textit{unit}); \\
            &\quad \quad \text{else }\{ \text{collect }(2, (\textit{tail}\ xs, p));\\
            & \quad \quad \quad \text{if } (\textit{sumw}\ p) + (\textit{head}\ xs).1 \leq W \text{ then} \\
            & \quad \quad \quad \text{collect }(3, (\textit{head}\ xs, (\textit{tail}\ xs, (\textit{head}\ xs)::p))); \}\\
            &\quad \phi = \lambda (\textit{tag}, p). \text{if }\textit{tag} = 1\text{ then collect }[]; \\ 
            &\quad \quad \text{else if }\textit{tag} = 2 \text{ then collect }p; \\
            & \quad \quad \text{ else collect }(p.1::p.2);\\
            &\quad o = \lambda p. \textit{sumv}\ p
            \end{align*}
            }
            \vspace{-17pt}
            \caption{A valid program for 0/1 knapsack in $\mathcal L_H$. The search state here is $(xs, p)$, where $xs$ is the list of remaining items, $p$ is the list of selected items.}
            \label{fig:p2}
      \end{minipage} 
\end{figure*}

Figure \ref{fig:p1} shows a program corresponding to the input program $(g, o)$ specified in Figure \ref{figure:01spec}, where $g$ is expressed by hylomorphism $\hylo{\phi, \psi}_{\F}$. Because both $\phi, \psi$ return a set, we use $|\textit{collect}|$ instead of $|\textit{return}|$ to express their outputs, where $|\textit{collect}\ e|$ inserts the value of $e$ to the resulting set.

Relational hylomorphisms are natural for specifying COPs, where $\psi$ and $\phi$ specify the recursive structure and the construction of solutions respectively. However, relational hylomorphism is not enough to express optimizations (e.g., thinning), where the construction of solutions is no longer independent due to the optimization. Therefore, \mainname takes a relational hylomorphism as the input but expresses the internal optimized programs via recursive generators.




\subsection{Programs} \label{subsection:program}
In \mainname and \plusname, a program is represented by a pair $(g, o)$.  Given an instance $i$ of the COP, the output of $(g, o)$ is equal to $\textit{argmax}\ o\ (g\ i)$. In terms of solving COPs, two programs are equivalent on an instance if they achieve the same objective value.

\begin{definition}[Equivalence] Two programs $(g_1, o_1)$ and $(g_2, o_2)$ are equivalence on instance $i$, denoted as $(g_1,o_1) \sim_{i} (g_2,o_2)$, if $\max(g_1\ p), p \in (g_2\ i) = \max(o_2\ p), p \in (g_2\ i)$.
\end{definition} 
This concept can be naturally extended to a set of instances. Program $(g_1, o_1)$ and $(g_2, o_2)$ are equivalence on a set $I$ of instances, denoted as $(g_1, o_1) \sim_{I} (g_2, o_2)$, if $\forall i \in I, (g_1, o_1) \sim_i (g_2, o_2)$. 

In this paper, we provide a simple language $\mathcal L_H$ for specifying COPs via relational hylomorphisms. The syntax of this language is shown as Figure \ref{fig:syntax}, and three different programs in $\mathcal L_H$ for describing \textit{0/1 knapsack} are shown as Figure \ref{fig:p1}, \ref{fig:p3} and \ref{fig:p2}, where $|\textit{if }\mathbb E \textit{ then } \mathbb S|$ is a sugar of $|\textit{if }\mathbb E \textit{ then }\mathbb S \textit{ else skip};|$, and $|\lambda (x_1, x_2). \mathbb S|$ is a sugar of extracting components in the input for $|\lambda x. \mathbb S|$. 




\subsection{Thinning} \label{section:thinning}
The definition of thinning is based on \textit{preorders}. A preorder on object $A$ is a relation that is reflexive ($\forall a \in A, aRa$) and transitive $(\forall a,b,c \in A, aRb \wedge bRc \rightarrow aRc)$. 

\begin{definition} \label{def:thin}Given a preorder $R$ on $A$, $\textit{thin}[R]: \mathsf PA \rightarrow \mathsf PA$ is an arrow such that for any set $s \subseteq A$, $\textit{thin}[R]\ s$ is the smallest subset of $s$ satisfying $\forall a \in s, \exists b \in \textit{thin}[R]\ s, aRb$.
\end{definition}

In this paper, we focus on a special case of $\textit{thin}[R]$ where $R$ is the conjunction of comparisons on several key functions. We denote such a preorder as a \textit{keyword preorder}.
\begin{definition}[Keyword Preorder] A keyword preorder $R$ of comparisons $\{(op_i, k_i)\}$ on $A$ is defined as $aRb \iff \wedge_{i} (k_i\ a)op_i (k_i\ b)$, where $k_i$ is an arrow from $A$ to $\texttt{Int}$ and $op_i \in \{\leq, =, \geq\}$.
\end{definition}

Given a keyword preorder $R$, Theorem \ref{theorem:thinning} shows that the size of the set returned by $\textit{thin}[R]$ and the time cost of $\textit{thin}[R]$ can both be bounded by the ranges of the key functions involved in $R$.

\begin{theorem} \label{theorem:thinning}Given a keyword preorder $R$ of $\{(op_i, k_i)\}$, define function $N_R(S)$ as the following, where $\textit{range}(k, S)$ is the range of $k$ on $S$, i.e., $\max_{a \in S}(k\ a) - \min_{a \in S}(k \ a)$, and $\max_1(S)$ returns the largest element in $S$ with default value $1$.
$$
N_R(S) \coloneqq \left( \prod_{i}\textit{range}(k_i, S)\right) \bigg/ \max_i\big(\textit{range}(k_i, S)\ \big|\ op_i \in \{\leq, \geq\}\big)  
$$
\begin{itemize}[leftmargin=*]
\item For any set $S$, $|\textit{thin}[R]\ S| \leq N_R(S)$.
\item There is an implementation of $\textit{thin}[R]$ with time complexity $O(N_R(S)\textit{size}(R) + T_R(S))$, where $S$ is the input set, $\textit{size}(R)$ is the number of comparisons in $R$, $T_R(s)$ is the time complexity of evaluating all key functions in $R$ for all elements in $S$.
\end{itemize}
\end{theorem}

Due to the space limit, we omit the proofs to the theorems and move them to the appendix.

\subsection{Thinning Theorem}
The thinning theorem proposed by \citet{DBLP:books/daglib/0096998} shows that $\textit{thin}[R]$ can be used to derive efficient memoization algorithms for COPs. In this paper, we use the following variant of the thinning theorem, which generalizes the original one to all relational hylomorphisms.

Given program $(h = \hylo{\phi, \psi}_{\F}, o)$ and a search state $s$, invoking $h$ on $s$ without memoization generates a search tree where each vertex corresponds to a search state. We introduce two notations $T_h$ and $S_h$ to access the structure of this tree, where $T_h\ s$ and $S_h\ s$ represent the set including all direct children of $s$ and the set including all states in the subtree of $s$ respectively. 

Besides, we introduce relation $\twoheadrightarrow_{h, s}$ to denote the constructions of solutions. For partial solution $p \in h\ s$ and tuple $(p_1, \dots, p_k)$ of partial solutions, $(p_1, \dots, p_k) \twoheadrightarrow_{h, s} p$ holds if there is an invocation of $\phi$ where $(p_1, \dots, p_k)$ are all partial solutions used in the input and $p$ is inside the output.

\begin{theorem}[Thinning Theorem] \label{theorem:thinning-theorem} Given program $(h\!= \!\hylo{\phi, \psi}_{\F},o)$ and preorder $R$, for any instance $i$,  $(rg(\textit{thin}[R] \circ \textit{cup} \circ \mathsf P \phi, \psi)_{\F}, o) \sim_i (h, o)$ if the following two conditions are satisfied.
    \begin{enumerate}
    \item $\forall s \in S_h\ i, \forall p_1, p_2 \in h\ s, p_1Rp_2 \rightarrow (o\ p_1 \leq o\ p_2)$.
    \item $\forall s \in S_h\ i, \forall \overline{p_1} = (p_{1,1}, \dots, p_{1,k}), \overline{p_2} = (p_{2,1}, \dots, p_{2,k})$, where $p_{1,i}$ and $p_{2,i}$ are partial solutions of the same search state for all $i \in [1, k]$, the following formula is always satisfied. 
    \begin{align}
    \bigwedge_{i=1}^k p_{1,i}Rp_{2,i} \rightarrow \forall p_1', \bigg(\overline{p_1} \twoheadrightarrow_{h,s} p_1' \rightarrow \exists p_2', \big(\overline{p_2}\twoheadrightarrow_{h,s} p_2' \wedge p_1' R p_2'\big)\bigg) \label{formula:thinning2}
    \end{align}
    \end{enumerate}
\end{theorem}

\subsection{Lifting and Partial Lifting} \label{subsection:lifting}
Lifting problems and partial lifting problems are synthesis tasks studied by \citet{lifting}, which are generalized from the synthesis task for automated parallelization.

\begin{definition} \label{def:lifting}Given arrows $p, h$ starting from object $A$, a set $M$ including $n$ arrows $m_i: \mathsf F_{m_i}A \rightarrow A$, and an example space $E$ attaching a set of examples $E[m_i]$ to each $m_i \in M$, lifting problem $\mathsf{LP}(M, p, h, E)$ and partial lifting problem $\mathsf{PLP}(M,p,h, E)$ are to find $\red{?f}, \red{?c_1}, \dots, \red{?c_n}$ such that Equation \ref{formula:lp} and \ref{formula:plp} are satisfied for all $m_i \in M$, respectively.
    \begin{gather}
    \forall e \in E[m_i], ((p \triangle \red{?f}) \circ m_i)\ e = (\red{?c_i} \circ \mathsf F_{m_i}(h \triangle \red{?f}))\ e \label{formula:lp}\\
    \forall e \in E[m_i], (p \circ m_i)\ e = (\red{?c_i} \circ \mathsf F_{m_i}(h \triangle \red{?f}))\ e \label{formula:plp}
    \end{gather}
\end{definition}

\textit{AutoLifter}~\cite{lifting} is an efficient synthesizer for these two tasks, and guarantees that the time complexities of $\red{?f}$ and $\red{?c}$ are polynomial-time and constant-time respectively.
\section{\mainname: Automating Thinning} \label{section:main}
Given program $(h = \hylo{\phi, \psi}_{\mathsf F}, o)$ and a set of instances $I$, \mainname generates a memoization algorithm by applying thinning to $(h, o)$. Concretely, \mainname synthesizes a keyword preorder $\red{?R}$ satisfying Theorem \ref{theorem:thinning-theorem} for all instance $i \in I$, and returns the following program.
\begin{align}\textit{prog}_1 = (rg(\textit{thin}[\red{?R}] \circ \textit{cup} \circ \mathsf P \phi, \psi)_{\F}^m, o)\label{form:thinning}\end{align}
where $m$ is the trivial key function $\lambda s. s$ for memoization. 

\subsection{Synthesis Task} \label{section:main-task}
To generate an efficient memoization algorithm, \mainname needs to find a keyword preorder $\red{?R}$ such that the result of applying thinning with $\red{?R}$ is correct and efficient.

For correctness, $\red{?R}$ must satisfy the two conditions provided by Theorem \ref{theorem:thinning-theorem}. \mainname ensures the first condition by requiring $\red{?R}$ to include comparison $(\leq, o)$, and considers the following equivalent form of Formula \ref{formula:thinning2} in the second condition. 
\begin{align}
    \exists p_1', \bigg(\overline{p_1} \twoheadrightarrow_{h,s} p_1' \wedge \forall p_2', \big(\neg \overline{p_2}\twoheadrightarrow_{h,s} p_2' \vee \neg p_1' R p_2'\big)\bigg) \rightarrow \bigvee_{i=1}^k \neg p_{1,i} R p_{2, i} \label{formula:thinning-eq}
\end{align}

Given a concrete preorder $R$ and an instance $i$, we denote pair $(\overline{p_1}, \overline{p_2})$ as a counter-example for $R$ on instance $i$ if the Formula \ref{formula:thinning-eq} is violated after substituting into $\overline{p_1}$ and $\overline{p_2}$. Let $CE(R, i)$ be the set of all counter-examples of $R$ on instance $i$. Then finding a correct preorder $\red{?R}$ for thinning on instance $i$ is equivalent to finding $\red{?R}$ such that $CE(\red{?R}, i)$ is empty.

For two keyword preorders $R_1$ and $R_2$ where comparisons in $R_1$ form a subset of those in $R_2$, Lemma \ref{lemma:monoce} relates the sets of counter-examples of $R_1$ and $R_2$ with the extra comparisons in $R_2$.

\begin{lemma}\label{lemma:monoce} Given instance $i$, for any two keyword preorders $R_1, R_2$ where all comparisons in $R_1$ are included in $R_2$, the following formula is always satisfied.
    $$
    \forall (\overline{p_1}, \overline{p_2}) \in CE(R_1, i), (\overline{p_1}, \overline{p_2}) \notin CE(R_2, i) \leftrightarrow \neg \overline{p_1}(R_2/R_1) \overline{p_2}
    $$
where $R_2/R_1$ represents the keyword preorder formed by the comparisons in $R_2$ that are not used in $R_1$.
\end{lemma}

Lemma \ref{lemma:monoce} suggests an incremental synthesis scheme for $\red{?R}$. To synthesize a correct preorder by enlarging a known keyword preorder $R_1$ with an unknown one $\red{?R_2}$, $\red{?R_2}$ must \textit{satisfy} all examples $(\overline{p_1}, \overline{p_2})$ in $CE(R_1, i)$, where $\red{?R_2}$ satisfies example $(\overline{p_1}, \overline{p_2})$ is defined as $\neg \overline{p_1} \red{?R_2} \overline{p_2}$.  

For efficiency, the number of plans returned by $\textit{thin}[\red{?R}]$ and the time cost should be minimized. Guided by Theorem \ref{theorem:thinning}, \mainname optimizes the following objective function while synthesis.
\begin{equation*}
\textit{cost}(\red{?R}, I) \coloneqq N_{\red{?R}}(P), \quad P = \big\{p\big|i \in I, s \in S_h\ i, p \in h\ s\big\} 
\end{equation*}

\subsection{Synthesis Algorithm} \label{section:main-algorithm}
We start with a synthesis algorithm for a subtask where a finite set of comparisons $C$ and a size limit $n_c$ are provided. In this subtask, the search space of $\red{?R}$ is constrained to keyword preorders constructed by $(\leq, o)$ and at most $n_c$ comparisons in $C$.

\SetKwFunction{Search}{BestPreorder}
\SetKwFunction{CC}{CandidateComps}
\begin{algorithm}[t]
    \small
    \caption{Synthesizing preorder $\red{?R}$ for applying thinning.}
    \KwIn{Input program $(h, o)$, a set of instances $I$, a set of comparisons $C = \{(op_i,k_i)\}$, and a size limit $n_c$.}
    \KwOut{A keyword preorder $\red{?R}$ for $(h, o)$ that involves only comparisons in $C$.}
    \label{alg:preorder}
    \LinesNumbered
    \SetKwProg{Fn}{Function}{:}{}
    \Fn{$\Search{$R, lim, \textit{costLim}$}$}{
        $es \gets \cup_{i \in I}CE(R, i)$; \\
        \lIf{$\textit{es} = \emptyset$}{\Return $R$} 
        \lIf{$lim = 0$}{\Return $\bot$}
        $cList \gets \CC(C, lim, es)$; \\
        Sort $cList$ in the increasing order of $\textit{cost}(R \cup \{c\}, I)$; \\
        $\textit{res} \gets \bot$; \\
        \ForEach{$c \in cList$}{
            \lIf{$\textit{cost}(R \cup \{c\}, I) \geq \textit{costLim}$}{\textbf{continue}}
            $R' \gets \Search{$R \cup\{c\},lim-1, \textit{costLim}$}$; \\
            \lIf{$R' \neq \bot$}{$(\textit{res}, \textit{costLim}) \gets (R', \textit{cost}(R', I))$}
        }
        \Return $\textit{res}$;
    }
    \Return $\Search{$\{(\leq, o)\}, n_c, +\infty$}$;
\end{algorithm}

As shown in Algorithm \ref{alg:preorder}, \mainname solves this subtask via \textit{branch-and-bound}. The main function \Search decides comparisons used in $\red{?R}$ one by one with an upper bound on the cost (Lines 1-14). In each turn, a set of candidate comparisons is identified via $\CC$ (Line 5) and they are considered in the increasing order of the cost (Line 6). For each comparison $c$, if its cost is smaller than the bound (Line 9), preorders including $c$ will be considered recursively (Line 10). Results of recursions will be used to update the bound (Line 11).

The implementation of $\CC{}$ is crucial to the efficiency of Algorithm \ref{alg:preorder}. If it returns too many candidate comparisons, the search space of \Search will be too large to be explored efficiently. Based on the following lemma, $\CC{$C, lim, es$}$ in \mainname returns only those comparisons that satisfy at least $|es|/lim$ examples.

\begin{lemma}\label{lemma:candidatecmp} Given a set of instances $I$, for any two keyword preorders $R_1, R_2$ where all comparisons in $R_1$ are included in $R_2$ and $\forall i \in I, CE(R_2, i) = \emptyset$, there exists a comparison $(op, k) \in R_2/R_1$ satisfying at least $1/(|R_2| - |R_1|)$ portion of examples in $CE(R_1, I) =\cup_{i \in I} CE(R_1, i)$, i.e., 
\begin{align*}
\left|\left\{(p_1, p_2)\in CE(R_1, I)\ \big|\ \neg \big((k\ p_1)op(k\ p_2)\big)\right\}\right| \geq  |CE(R_1, I)| \big / (|R_2|-|R_1|)
\end{align*}
where $|R|$ represents the number of comparisons in keyword preorder $R$.
\end{lemma}

A direct implementation of \CC{$C, lim, es$} in Algorithm \ref{alg:preorder} (Line 5) is to evaluate all comparisons in $C$ on all examples in $es$. Such an implementation is inefficient because both $C$ and $es$ can be large. \mainname further improves this point by sampling. For each comparison in $C$, \mainname decides whether to include it in \CC{$C, lim, es$} in two steps.
\begin{itemize}
\item First, \mainname draws $lim \times n_t$ random examples from $es$, where $n_t$ is a given parameter. The comparison will be ignored if the number of samples it satisfies is smaller than $k_t = \lfloor lim/2 \rfloor$.
\item Second, the comparison is evaluated on all examples and is returned by \CC only when it satisfies at least $|es|/lim$ examples.
\end{itemize}
Specially, $k_t$ is set to $lim \times n_t$, i.e., the number of samples, when $lim$ is equal to $1$. 

By Chernoff bound, the probability for a comparison that satisfies at least $1/lim$ portion of examples in $es$ to be filtered out by the samples is at most $\exp(-n_t/8)$. Therefore, the probability for Algorithm \ref{alg:preorder} to incorrectly exclude some comparison can be controlled by $n_t$. Moreover, as we shall show later, because Algorithm \ref{alg:preorder} in \mainname is wrapped in an iteration procedure, such an error rate does not affect the completeness of \mainname, as demonstrated by Theorem \ref{theorem:complete}.

The following is some details to make Algorithm \ref{alg:preorder} useful for synthesizing $\red{?R}$ in practice. 

\noindent \textbf{Decide a finite set of comparisons}. In practice, the space of comparisons is specified by a grammar, in which the number of comparisons is infinite. Because of the difficulty of finding the optimal preorder from an infinite set of comparisons, \mainname approximates it via the principle of \textit{Occam's Razor} and prefers to use smaller comparisons to construct $\red{?R}$. 
Concretely, \mainname selects a parameter $s_c$ and takes all comparisons no larger than $s_c$ in the grammar as set $C$.

\noindent \textbf{Decide $s_c$ and $n_c$}. Because both parameters $s_c$ (the size limit of comparisons) and $n_c$ (the number limit of comparisons) are not given in practice, \mainname decides them iteratively with two parameters $s_c^*$ and $n_c^*$. In each turn, \mainname considers the subtask where $n_c = n_c^*$ and $s_c = s_c^*$. If there is no solution, both $s_c^*$ and $n_c^*$ will be increased by $1$ in the next iteration.

\smallskip
We prove the completeness of \mainname for synthesizing preorders in the following theorem.
\begin{theorem} \label{theorem:complete} Given program $(h, o)$, a set of instances $I$ and a grammar $G$ for available comparisons, if there exists a keyword preorder $R$ satisfying (1) $\forall i \in I, CE(R, i) = \emptyset$, and (2) $R$ is constructed by $(\leq, o)$ and some comparisons in $G$, \mainname must terminate and return such a keyword preorder.
\end{theorem} 
\section{\plusname: Improving Thinning via Three Rules} \label{section:plus}
To improve the memoization algorithm generated by thinning, \plusname introduce three supplementary rules on the basis of \mainname to optimize three factors ignored by thinning: (1) the size of solutions, (2) the number of keys in memoization, and (3) the size of search states. 

\subsection{Rule 1: Optimizing the Representation of Solutions} \label{subsection:rule1}
The program $\textit{prog}_1$ generated by thinning is in Form \ref{form:thinning}, where the bodies of $\phi$ and $\psi$ are statements ($\mathbb S$) in language $\mathcal L_H$ (Figure \ref{fig:syntax}). \plusname optimizes the size of partial solutions via a converting function $\red{?f_p}$, which maps partial solutions into a tuple of scalar values. 

\plusname constructs an intermediate program $\textit{prog}_1'$ that performs almost the same as $\textit{prog}_1$ except each partial solution $p$ is stored as $\red{?f_p}\ p$. The construction is done by rewriting all solution-related functions, which can be classified into the following two types according to their output type.
\begin{itemize}
\item Constructors, which constructs a new partial solution. In language $\mathcal L_H$, constructors are the sub-expressions $sp_e$ of all $|\textit{collect}\ sp_e|$ in $\phi$.
\item Queries, which returns a tuple of scalar values. In language $\mathcal L_H$, queries are all solution-related expressions outside $|\textit{collect}\ sp_e|$ in program $\textit{prog}_1$.
\end{itemize}

The following shows the content of $\textit{prog}_1'$.
\begin{equation}
\textit{prog}_1' = (rg((\textit{thin}[R']) \circ \textit{cup}\circ \mathsf P \phi', \psi)_{\F}, \red{?q[o]}) \label{form:prog2new}
\end{equation}

\SetKwFunction{ES}{RewriteS}
\SetKwFunction{EE}{RewriteE}
\SetKwFunction{Child}{Children}
\SetKwFunction{Insert}{Insert}
\SetKwFunction{Rewrite}{Clone}
 
\begin{algorithm}[t]
    \small
    \caption{ Construct $\phi'$ from $\phi$.}
    \KwIn{A program $\phi = \lambda s. \phi_s$, where $\phi_s$ is a statement in $\mathcal L_H$.}
    \KwOut{Queries $Q$, Constructors $M$, and $\phi'$ in $\textit{prog}_2$.}
    \label{alg:extract}
    \LinesNumbered
    \SetKwProg{Fn}{Function}{:}{}
    $Q \gets \emptyset; \ \ M \gets \emptyset;$ \\
    \Fn{$\EE{$p_e$}$}{
        \lIf{$|s| \notin p_e$}{\Return $p_e$}
        \lIf{$p_e = |s|$}{\Return $\bot$}
        $sp_e' \gets \{\EE{$sp_e$}\ |\ sp_e \in \Child{$p_s, \mathbb E$}\}$; \\
        \If{$\bot \in sp_e'$}{
            \If{$p_e$ output a tuple of scalar values}{
            $t_1, \dots, t_m \gets $ temporary variables in $p_e$; \\
            $Q.\Insert{$p_e$}$;\ \ \Return $|\red{?q[p_e]}\ (s, t_1, \dots, t_m)|$; 
            }
            \Return $\bot$;
        }
        \Return \Rewrite{$p_e, sp_e'$};
    }
    \Fn{$\ES{$p_s$}$}{
        \If{$p_2 = |\text{collect}\ sp_e|$}{
            $t_1, \dots, t_n \gets$ temporary variables in $p_s$; \\
            $M.\Insert{$sp_e$}$; \ \ \Return $|\red{?c[p_s]}\ (s, t_1, \dots, t_n)|$;
        }
        $sp_e' \gets [\EE{$sp_e$}\ |\ sp_e \in \Child{$p_s, \mathbb E$}]$; \\
        $sp_s' \gets [\ES{$sp_s$}\ |\ sp_e \in \Child{$p_s, \mathbb S$}]$; \\
        \Return \Rewrite{$p_s, sp_e' \cat sp_s'$}.
    }
    $\phi'_s \gets \ES{$\phi_s$}$;\ \ \Return $Q, M, \lambda s. \phi'_s$;
\end{algorithm}

First, Both the objective function $o$ and keywords $\red{?R}$ are queries. In $\textit{prog}_1'$, $o$ and $\red{?R}$ are replaced with $\red{?q[o]}$ and $R' = \{(op, \red{?q[k]})\ |\ (op, k) \in R\}$. Second, constructors and other queries are extracted from $\phi$. In $\textit{prog}_1'$, $\phi$ is replaced with function $\phi'$ constructed by Algorithm \ref{alg:extract}, a structural recursion on the AST of $\phi$. The notations used in Algorithm \ref{alg:extract} are explained blow. 
\begin{itemize}
\item $|s|$ represents to the input variable of $\phi$.
\item \EE and \ES corresponds to non-terminal $\mathbb E$ and $\mathbb S$ in $\mathcal L_H$ respectively. Specially, $\EE$ returns $\bot$ if the current expression in $\textit{prog}_1$ is inside a query.
\item $\Child{$p, \mathbb N$}$ returns all children of AST node $p$ corresponding to non-terminal $\mathbb N$, and $\Rewrite{$p, c$}$ constructs a new AST node by replaces the children of $p$ with list $c$.
\end{itemize}

To characterize the specification of $\red{?f_p}, \red{?q}$ and $\red{?c}$, we introduce two notations $\F[p]$ and $RE(p, i)$.
\begin{itemize}
\item For each query (or constructor) $p$, functor $\F[p]$ indicates partial solutions in the input of $p$. For queries extracted from $\red{R?}$ and $o$, $\F[p] \coloneqq \mathsf I$; For functions extracted from $\phi$, $\F[p] \coloneqq \F \times !T_1 \times \dots !T_n$, where $\F$ is the functor used by the recursive generator in $\textit{prog}_1$, and $!T_i$ is the type of the $i$th temporary variable used by $p$\footnote{Note that there are only two ways to introduce a temporary variable in language $\mathcal L_H$, which are lambda expressions ($\lambda x. \mathbb E$) and for-loops ($\text{foreach }x \in [\mathbb E, \mathbb E] \text{ in }\mathbb S$).}. 
\item Given instance $i$, for each query (or constructor) $p$, set $RE(p, i)$ records all the inputs on which $p$ is invoked during $(\textit{prog}_1\ i)$, which can be obtained by instrumenting $\textit{prog}_1$.
\end{itemize} 

Lemma \ref{theorem:step2} provides a sufficient condition for $\red{?f_p}, \red{?q}$ and $\red{?c}$ to guarantee the correctness of $\textit{prog}_1'$. 

\begin{lemma}\label{theorem:step2} Given instance $i$ and program $\textit{prog}_1$ in Form \ref{form:thinning}, let $\textit{prog}_1'$ be result of Rule 1. If for any query $q$ and constructor $m$, Formula \ref{formula:2q} and Formula \ref{formula:2m} are satisfied respectively, $\textit{prog}_1 \sim_i \textit{prog}_1'$ holds.
\begin{gather}
    \forall e \in RE(q, i), q\ e = \red{?q[q]}\ (\mathsf F[q]\red{?f_p}\ e)  \label{formula:2q}\\
    \forall e \in RE(m, i), \red{?f_p}\ (m\ e) = \red{?c[m]}\ (\F[m]\red{?f_p}\ e)   \label{formula:2m}
\end{gather}
\end{lemma}

\smallskip
To synthesize from Formula \ref{formula:2q} and \ref{formula:2m}, \plusname first rewrite Formula \ref{formula:2q} as the follows.
\begin{gather}
(q \circ id)\ e = (\red{?q[q]} \circ \F[q](\textit{null} \triangle \red{?f_p}))\ e \label{formula:si-query}
\end{gather}
where $id$ is the identity function, i.e., $\lambda x. x$, and $\textit{null}$ is a dummy function that outputs nothing.

Compared to Definition \ref{def:lifting}, Formula \ref{formula:si-query} is equal to the specification of partial lifting problem $\mathsf {PLP}(\{id\}, q, \textit{null}, \cup RE(q, i))$. Therefore, \plusname invokes \textit{AutoLifter} to solve this task and find a solution $\red{?q[q]} = q[q]$ and $\red{?f_p} = f_p[q]$ for each query $q$.

Then, \plusname merges these results by fixing $\red{?f_p}$ to $f_q \triangle \red{?f_p'}$, where $f_q = \triangle_{q \in Q}f_p[q]$ and $Q$ represents the set of queries. For each query $q$, such a $\red{?f_p}$ and $q[q]$ can be converted into a solution to Formula \ref{formula:2q} by adjusting the input type of $q[q]$. 

Next, \plusname substitutes $\red{?f_p}$ into the Formula \ref{formula:2m} and rewrites the result into the follows. 
\begin{gather}
\!\!\!\forall m \in M, ((f_q \triangle \red{?f_p'}) \circ m)\ e = (\red{?c[m]} \circ \F[m](f_q \triangle \red{?f_p}))\ e 
\label{formula:si-mod}
\end{gather}
where $M$ represents the set of constructors. Compared to Definition \ref{def:lifting} gain, Formula \ref{formula:si-mod} is equal to the specification of lifting problem $\mathsf {LP}(M, f_p, f_p, E)$, where $E[m]$ is equal to $\cup RE(m, i)$. Therefore, the remaining part $\red{?f_p'}$ in $\red{?f_p}$ and all $\red{?c[m]}$ can be synthesized by invoking \textit{AutoLifter} again.

At last, \mainname fills the synthesized $\red{?f_p}, \red{?q[q]}$ and $\red{?c[m]}$ into the intermediate program $\textit{prog}_1'$ and thus obtains the result of applying Rule 1.

\subsection{Rule 2: Optimizing the Number of Keys}
Given program $(r, o)$, where $r$ is a recursive generator, and a set of instances $I$, \plusname optimizes the number of keys by synthesizing a proper key function $\red{?f_m}$ and returns $(r^{\red{?f_m}}, o)$ as the result.

Lemma \ref{theorem:step3} provides a sufficient condition for $\red{?f_m}$ to guarantee the correctness of $(r^{\red{?f_m}}, o)$.

\begin{lemma}\label{theorem:step3} Given instance $i$ and program $(r, o)$, where $r$ is a recursive generator, $(r^{\red{?f_m}}, o) \sim_i (r, o)$ if for any two states $s_1, s_2 \in (S_r\ i)$, $r\ s_1 \neq r\ s_2 \rightarrow \red{?f_m}\ s_1 \neq \red{?f_m}\ s_2$.
\end{lemma}

Given an instance $i$, a set of examples $ME(i)$ can be extracted according to Lemma \ref{theorem:step3}. Each example in $ME(i)$ is a pair of states, on which $\red{?f_m}$ must output different keys.

For efficiency, to limit the number of keys, \plusname requires $\red{?f_m}$ to return a tuple of scalar values, and thus synthesizes it in the form of $\lambda s. (\red{?key_1}\ s, \dots, \red{?key_n}\ s)$. At this time, the number of keys returned by $\red{?f_m}$ is bounded by the product of the ranges of all $\red{?key_i}$. Therefore, \plusname minimizes the following objective function while synthesizing $\red{?f_m}$.
$$
\textit{cost}(\red{?f_m}, I) \coloneqq \prod_{i=1}^n \textit{range}(k_i, S), \text{ where } S = \cup_{i \in I} (S_r\ i)
$$

Notice that $\red{?f_m}$ can be regraded as a preorder $\red{?R_m}$ where $s_1\red{?R_m}s_2$ is defined as $\wedge_i (\red{key_i}\ a) = (\red{key_i}\ b)$, and $\textit{cost}(\red{?f_m}, I)$ is exactly the same as $N_{\red{?R_m}}(\cup_{i \in I}, S_r\ i)$ defined in Theorem \ref{theorem:thinning}. Therefore, the synthesis task of $\red{?f_m}$ has the same form as the synthesis task for thinning discussed in Section \ref{section:main-task}, and can be solved by the synthesis algorithm proposed in Section \ref{section:main-algorithm}.

\subsection{Rule 3: Optimizing the Representation of Search States}
After applying the second rule, the program is transformed into the following form.
$$
\textit{prog}_3 = (r, o), \quad  r = rg\big(\textit{thin}[R] \circ \textit{cup} \circ \mathsf P\phi, \psi\big)_{\F}^{f_m}
$$ 
where the bodies of $\phi$ and $\psi$ are statements ($\mathbb S$) in $\mathcal L_H$.

Similar to Rule 1, \plusname optimizes the size of states via a converting function $\red{?f_s}$, which maps states in $\textit{prog}_3$ into a tuple of scalar values. \plusname rewrites all state-related queries and constructors in $\textit{prog}_3$ with functions $\red{?q}$ and $\red{?c}$ respectively, and thus constructs an intermediate program $\textit{prog}_{3}'$ that performs almost the same as $\textit{prog}_3$ except state $s$ in $\textit{prog}_3$ is stored as $\red{?f_s}\ s$. 

In $\textit{prog}_3$, key function $f_m$ is identified as a query. Other queries and constructors are extracted from $\psi$ similarly to Algorithm \ref{alg:extract}. The only difference is that in $\psi$, the parameter $sp_e$ of $|\textit{collect}\ sp_e|$ is a structure including states, on which $r$ will be recursively applied, and scalar values, which will be directly passed to $\phi$. Because $\F$ is a polynomial functor, these components can be extracted from $sp_e$ via the access operator, i.e., $|sp_e.i_1.i_2\dots i_n|$. Those components corresponding to states and scalar values are identified as constructors and queries respectively. 

We omit the concrete synthesis task and the synthesis algorithm here because they are the same as the counterparts for Rule 1.

\subsection{Properties of \plusname}
We end this section with two noticeable properties of \plusname. First, Theorem \ref{theorem:correct} guarantees the result of \mainname to be correct on all given instances.

\begin{theorem} \label{theorem:correct} Given input program $(h, o)$ where $h$ is a relational hylomorphism and a set of instances $I$, let $p^*$ be the program generated by \plusname with $I$. Then $\forall i \in I$, $(h, o) \sim_i p^*$.
\end{theorem}

Second, Theorem \ref{theorem:efficiency} shows when $\phi, \psi$ and $o$ in the input program and all functions involved in the program space run in pseudo-polynomial time and involve only linear arithmetic operators, the program synthesized by \plusname is guaranteed to run in pseudo-polynomial time, i.e., the time complexity of the result is polynomial to the size and values in the input.

\begin{theorem} \label{theorem:efficiency} Given input $(\hylo{\phi, \psi}_{\F}, o)$ and grammar $G$ specifying the program space for synthesis tasks, the program generated by \plusname must be pseudo-polynomial time if the following conditions are satisfied: (1) $\phi$, $\psi$ and programs in $G$ runs in pseudo-polynomial time, (2) each value and the size of each recursive data structure generated by the input program are pseudo-polynomial, (3) all operators in $G$ are linear, i.e., their outputs are bounded by a linear expression with respect to the input. 
\end{theorem}
\section{Discussion} \label{section:discussion}
In this section, we discuss three subtle details on the design of \mainname and \plusname. 

\noindent \textbf{The order of transformations}. \plusname optimizes the input program by applying thinning, Rule 1, Rule 2, and Rule 3 in order. Such an order is decided because of the following reasons.

First, Rule 1 should be applied after thinning. Because Rule 1 ignores most information in the solution, some attributes may become incalculable after applying Rule 1. In the example discussed in Section \ref{section:overview}, after replacing solution $p$ with $(\textit{sumw}\ p, \textit{sumv}\ p)$, the size $|p|$ becomes incalculable in the optimized program. Therefore, if Rule 1 is applied before thinning, some efficient preorders for thinning may become incalculable and thus an inefficient preorder may be used.

Second, Rule 3 should be applied after Rule 2. Before applying Rule 2, the key function used for memoization is $\lambda s. s$, which means that the whole state $s$ is necessary. Therefore, applying Rule 3 before Rule 2 cannot lead to any optimization. 

Third, thinning and Rule 1 should be applied before Rule 2. Because both thinning and Rule 1 simplify the output of the generator, applying them before Rule 2 may let the generator output the same on more states and thus may enable a more efficient key function.

\noindent \textbf{The requirement on the relational hylomorphism.} Following the thinning theorem (Theorem \ref{theorem:thinning-theorem}), \mainname requires the generator in the input program to be specified in the form of relational hylomorphism. We believe this requirement is not a significant limitation in practice because of the following two reasons.

First, as shown in Section \ref{section:moti-main}, a generator can be converted to hylomorphism by specifying the recursion of states and the construction of solutions separately. Such a conversion does not take algorithmic effort and should be easier for the user than proposing an efficient algorithm.

Second, there have been studies on automatically generating a hylomorphism from a recursive program~\cite{DBLP:conf/icfp/HuIT96}. Therefore, this limitation can be eliminated by combining \mainname and \plusname with these approaches. 

\noindent \textbf{The correctness guarantee of the result.} Both \mainname and \plusname ensure the correctness of the result only on a given set of instances. Such a guarantee can be improved to the  correctness on all instances via a complete verifier and the CEGIS framework~\cite{DBLP:conf/asplos/Solar-LezamaTBSS06}. 

However, due to the complexity of memoization, such a verifier may not exist. Therefore, in our implementation we use the probabilistic verifier provided by \textit{AutoLifter}, which verifies quickly by testing the result on a dynamically adjusted number of random instances. The guarantee provided by this verifier follows the framework of PAC learnability~\cite{DBLP:journals/cacm/Valiant84}, which ensures the probability for the error rate to be larger than a threshold is small. The practical performance of such guarantees has been demonstrated by \citet{DBLP:journals/pacmpl/JiXXH21, DBLP:conf/sigsoft/WangBKS21}.
\section{Implementation} \label{section:implementation}
Our implementation of \mainname and \plusname can be found in the supplementary material.

\noindent \textbf{Generating instances}. \mainname and \plusname require a set of instances to extract examples for each synthesis task. We generate these instances by sampling. Given an instance space specifying by assigning each integer with a range and each recursive data structure with an upperbound on the size, \mainname and \plusname sample from the space according to a uniform distribution. 

To limit the time cost of executing the input program, \mainname and \plusname decide this space by iteratively squeezing from a default space until the average time cost of the input program on a random instance is smaller than $10^{-3}$ second.

\noindent \textbf{Grammar}. \mainname uses a grammar to describe the space of possible synthesis results. In our implementation, we extended the CLIA grammar in SyGuS-Comp~\cite{padhisygus} with the following operators related to lists and binary trees.
\begin{itemize}[leftmargin=*]
\item Accumulate operator \textit{fold} and lambda expressions.
\item Access operators \textit{access} for lists, and \textit{value}, \textit{lchild}, \textit{rchild}, \textit{isleaf} for binary trees.
\item Match operator \textit{match} for lists and binary trees, which returns the first occurrence of a sublist (subtree) on a given list (tree). This operator is useful in applying Rule 3, as it can correspond the search state to some global data structure.
\end{itemize}

\noindent \textbf{Others}. The original implementation of \textit{AutoLifter} uses \textit{PolyGen}~\cite{DBLP:journals/pacmpl/JiXXH21}, a specialized solver for conditional linear expressions, to synthesize $\red{?c}$ for lifting problems. Therefore, when the input program uses non-linear operators, we will replace \textit{PolyGen} with a SOTA enumerative solver, \textit{observational equivalence}~\cite{DBLP:conf/pldi/UdupaRDMMA13}, to make \textit{AutoLifter} applicable.

To implement the preorder synthesizer introduced in Section \ref{section:main-algorithm}, we set $n_c^*$ to $2$ and $s_c^*$ to $10^{5}$, which are enough for most known tasks, and set $n_t$ to $8$, which ensures that the error rate of $\CC$ to be at most $e^{-1} \approx 37\%$.

To use the iterative verifier discussed in Section \ref{section:discussion}, we set the basic number of examples to $10^4$ and thus the probability for the error rate of the result to be more than $10^{-3}$ is at most $1.82 \times 10^{-4}$.
\section{Evaluation} \label{section:evaluation}
Our evaluation answers two research questions.
\begin{itemize}
\item \textbf{RQ1}: How is the overall performance of \mainname and \plusname?
\item \textbf{RQ2}: How do thinning and the three rules proposed in this paper perform in \plusname?
\end{itemize}

\subsection{Dataset}
Our evaluation is conducted on a dataset including $37$ input programs for $17$ different COPs. Besides problem \textit{0/1 knapsack} discussed in Section \ref{section:overview}, the other COPs are collected from \textit{Introduction to Algorithms}~\cite{cormen2009introduction}, a widely-used textbook for algorithm courses. This book introduces dynamic programming in its 15th chapter with $4$ example COPs and provides $12$ other COPs as the exercise. We include all these $16$ COPs in our dataset.

The input program is written in our language $\mathcal L_H$ defined in Section \ref{subsection:program}, where each program comprises a generator in the form of relational hylomorphism and a scorer. We divide the $17$ COPs into two categories according to whether solutions are constrained, and construct input programs for them via two criteria respectively.

The first category includes COPs where solutions are not constrained. A representative COP here is \textit{rod cutting}, the first example in the 15th chapter of \textit{Introduction to Algorithms}. 

\smallskip
\parbox{0.9\columnwidth}{\begin{center}\it
    Known that a rod of length $i$ worth $w_i$, the task is to cut a rod of length $n$ into several pieces and maximize the total value of all pieces.
\end{center}}
\smallskip

In this task, all possible ways to cut the rod are valid solutions. For each COP in this category (7 in total), we construct an input program as natural, where the generator returns all possible solutions and the scorer calculates the objective value.

The second category includes COPs where solutions are constrained. A representative COP in this category is \textit{0/1 knapsack}, where a solution is valid only when the total weight is no more than the capacity. For each COP in this category (10 in total), we implement three input programs $(g = \hylo{\phi, \psi}_{\F}, o)$ according to three extreme principles. 
\begin{itemize}
\item The first program keeps all information in the input of $\psi$ and filters out invalid solutions while deciding transitions. Figure \ref{fig:p2} shows such a program for \textit{0/1 knapsack}.
\item The second program tries all possibilities of constructing solutions in $\psi$ and lets $\phi$ filter out invalid ones. Figure \ref{fig:p1} shows such a program for \textit{0/1 knapsack}.
\item The last program uses $g$ to generate all solutions and excludes invalid ones via a small enough objective value. Figure \ref{fig:p3} shows such a program for \textit{0/1 knapsack}.
\end{itemize}

\subsection{Experiment Setup}
We run \mainname and \plusname on all $37$ tasks in the dataset and set a time limit of $120$ seconds for \mainname and each rule in \plusname. Especially, if a rule in \plusname times out, \plusname will skip this rule and go on to the next. 

For each execution, we record the time cost and the transformation result for \mainname and each rule in \plusname. We manually verify the correctness and the time complexity of each result and compare them with the reference algorithms provided by \citet{cormen2009introduction} and \citet{algsolutions}.

\subsection{RQ1: Overall Performance of \mainname and \plusname} \label{section:rq1}
\newcommand{\tc}[1]{\tilde O({#1})}
\begin{table*}[t]
    \caption{The performance of \mainname and \plusname on all tasks in our dataset.}
    \renewcommand\arraystretch{1.1}
    \begin{spacing}{1}
        \small

        \begin{threeparttable}
        \begin{tabular}{|c|c|c|c|c|c|c!{\vrule width 1.5pt}c|c|c|c|c|c|c|}
            \Xhline{1pt}
            \hline
            \multirow{2}{*}{COP} & \multirow{2}{*}{Imp} & \multirow{2}{*}{$T_{\text{inp}}$} & \multicolumn{2}{c|}{\mainname} & \multicolumn{2}{c!{\vrule width 1.5pt}}{\plusname}& \multirow{2}{*}{COP} & \multirow{2}{*}{Imp} & \multirow{2}{*}{$T_{\text{inp}}$} & \multicolumn{2}{c|}{\mainname} & \multicolumn{2}{c|}{\plusname}\\
            \cline{4-7} \cline{11-14}
            & & & $T_{\text{res}}$ & $\text{Time}$ & $T_{\text{res}}$ & $\text{Time}$ & & & & $T_{\text{res}}$ & $\text{Time}$ & $T_{\text{res}}$ & $\text{Time}$ \\
            \Xhline{1pt}
            \multirow{2}{*}{0/1} & 1 & \multirow{3}{*}{$2^n$} & \multirow{1}{*}{$2^n$} & $0.1$ & \multirow{3}{*}{$n^2\dag$}& $25.9$& \multirow{3}{*}{15-6} & 1 & \multirow{3}{*}{$2^n$} & $2^n$ & $0.5$ & \multirow{3}{*}{$n\dag$} & $41.2$\\
            \cline{2-2} \cline{4-5} \cline{7-7} \cline{9-9} \cline{11-12} \cline{14-14}
            \multirow{2}{*}{Knapsack}& 2 & &\multirow{2}{*}{$n^3$} & $1.8$ & &  $11.9$& & 2 & & \multirow{2}{*}{$n^2$} & $0.4$ & & $28.1$\\
            \cline{2-2} \cline{5-5} \cline{7-7} \cline{9-9} \cline{12-12} \cline{14-14}
            & 3 & & & $3.8$ & & $23.8$& & 3& & & $0.8$ & & $31.4$ \\
            \hline
            \multicolumn{2}{|c|}{Rod Cutting} & $2^n$& $n^3$ & $0.1$ & $n^{2}\dag$ & $18.4$& \multirow{3}{*}{15-7} & 1& \multirow{3}{*}{$n^n$} & \multirow{3}{*}{$n^3\dag$} & $0.1$& \multirow{3}{*}{$n^3\dag$}&$3.6$\\
            \cline{1-7} \cline{9-9} \cline{12-12} \cline{14-14}
            \multicolumn{2}{|c|}{Matrix Chain} & $4^n$& $n^4$& $0.1$&$n^{3}\dag$& $15.4$&  & 2 & & & $0.1$ & & $10.5$\\
            \cline{1-7} \cline{9-9} \cline{12-12} \cline{14-14}
            \multirow{3}{*}{LCS} & 1 & \multirow{3}{*}{$5.8^n$} & \multirow{3}{*}{$n^3$}& $0.1$& \multirow{3}{*}{$n^{2}\dag$}& $6.9$& & 3 & & & $0.2$ & & $8.9$\\
            \cline{2-2} \cline{5-5} \cline{7-14}
            & 2& & & $0.1$& &$5.9$ & \multirow{3}{*}{15-8} & 1 & $2^n$ & $2^n$ & $0.1$ & $n^2\dag$ & $31.3$\\
            \cline{2-2} \cline{5-5} \cline{7-7} \cline{9-14}
            &3& & & $0.3$ & & $24.3$ & & 2 & \multirow{2}{*}{$n^n$} & \multirow{2}{*}{$n^4$} & $0.2$ & \multirow{2}{*}{$n^3$} & $25.5$\\
            \cline{1-7} \cline{9-9} \cline{12-12} \cline{14-14}
            \multicolumn{2}{|c|}{Optimal BST} & $4^n$ &$n^4$& $0.2$ & $n^3\dag$& $24.5$& & 3 & & &$0.2$ & & $12.1$
            \\
            \cline{1-14} 
            \multicolumn{2}{|c|}{15-1} & $2^n$& $n^4$ & $0.1$ & $n^{4}$ & $123.7$& \multicolumn{2}{c|}{15-9} & $4^n$& $n^4$& $0.1$ & $n^3\dag$ & $4.8$\\
            \cline{1-3} \cline{4-14}
            \multirow{3}{*}{15-2} & 1 & \multirow{3}{*}{$2^n$}& \multirow{3}{*}{$n^3$} & $0.1$ & \multirow{3}{*}{$n^2\dag$} &$8.7$ & \multicolumn{2}{c|}{15-10} & $n^n$& $n^4$ & $0.1$ & $n^4$ & $34.1$\\
            \cline{2-2} \cline{5-5} \cline{7-7} \cline{8-14}
            & 2& & & $0.1$& & $5.6$ & \multirow{3}{*}{15-11} & 1 & \multirow{3}{*}{$n^n$} & $n^n$ & $0.1$ & $n^3\dag$ & $142.0$\\
            \cline{2-2} \cline{5-5} \cline{7-7} \cline{9-9} \cline{11-14}
            &3& & & $0.3$& & $20.9$ & & 2 & & \multirow{2}{*}{$n^4$}& $9.0$ & \multirow{2}{*}{$n^4$}& $137.5$\\
            \cline{1-7} \cline{9-9} \cline{12-12} \cline{14-14}
            \multicolumn{2}{|c|}{15-3} & $2^n$ & $n^3$ & $1.5$ & $n^3$ & $133.4$ & & 3 & & & $9.9$ & & $137.1$\\
            \cline{1-14}
            \multirow{3}{*}{15-4}& 1 & \multirow{3}{*}{$2^n$}& \multirow{3}{*}{$n^3$} & $0.1$ & \multirow{3}{*}{$n^3$}& $141.4$ & \multirow{3}{*}{15-12} & 1 & \multirow{3}{*}{$n^n$} & $n^n$ & $0.1$ & \multirow{3}{*}{$n^3\dag$} & $54.9$\\
            \cline{2-2} \cline{5-5} \cline{7-7} \cline{9-9} \cline{11-12} \cline{14-14}
            & 2& & & $0.2$ & & $240.5$ & & 2 & & \multirow{2}{*}{$n^5$} & $2.1$ & & $27.6$\\
            \cline{2-2} \cline{5-5} \cline{7-7} \cline{9-9} \cline{12-12} \cline{14-14}
            & 3& & & $0.3$ & & $240.6$ & & 3 & & & $2.9$ & & $31.6$\\
            \hline \Xcline{8-14}{1pt}
            \multirow{3}{*}{15-5} & 1 & \multirow{2}{*}{$7.6^n$}& \multirow{2}{*}{$n^3$} & $0.1$& \multirow{2}{*}{$n^2\dag$}& $40.1$\\
            \cline{2-2} \cline{5-5} \cline{7-7}
            & 2 & & & $0.1$& & $41.3$\\
            \cline{2-7}
            & 3 & \multicolumn{2}{c|}{$ \approx7.7^n\ddag$} & $120.0$& $7.7^n$ &  $274.9$\\
            \Xcline{1-7}{1pt}

        \end{tabular}
        \begin{tablenotes}[para]
        \small
          \item[\dag] The result achieve the same time complexity as the reference algorithm for the corresponding COP. \\ 
          \item[\ddag] Both time complexities are $\tilde \Theta(a^n)$, where $\sqrt a \approx 2.77$ is the largest real root of $x^4 - 2x^3 -2x^2 - 1$.
        \end{tablenotes}
    \end{threeparttable}
    \label{table:rq1}
    \end{spacing}
\end{table*}

We confirm that all programs generated by \mainname and \plusname are completely correct. The detailed performance of \mainname and \plusname are listed in Table \ref{table:rq1}.
\begin{itemize}
\item For each task, \textit{COP} lists the name of the corresponding COP in \textit{Introduction to Algorithm}, \textit{Imp} lists the index of the principle used to implement the input program if the COP is in the second category, and $T_{\textit{inp}}$ lists the time complexity of the input program in $\tilde \Omega$ notation.
\item For each approach, $T_{\textit{res}}$ lists the time complexity of the result program in $\tilde O$ notation, and \textit{Time} lists the number of seconds used to generate the result.
\end{itemize}

First, Table \ref{table:rq1} demonstrates the effectiveness of \mainname on automating thinning. \mainname achieves exponential speed-ups against the input program on $31/37(83.8\%)$ tasks with an average time cost of $4.2$ seconds. Besides, the comparison between the results of \mainname and the reference algorithms demonstrates the gap between thinning and human experts: \mainname achieves the same time complexity as the reference algorithm only on $3/37 (8.1\%)$ tasks.

Second, Table \ref{table:rq1} demonstrates the overall effectiveness of \plusname on synthesizing efficient memoization algorithms. \plusname achieves exponential speed-ups against the input program on $36/37(97.3\%)$ tasks with an average time cost of $59.2$ seconds. Moreover, compared to \mainname, the ability of \plusname is much closer to human experts: \plusname achieves the same time complexity as the reference algorithm on $26/37 (70.3\%)$ tasks. 

Besides, we conduct a case study on those 11 tasks where \plusname fails in achieving the same complexity as the reference algorithms, and conclude the following three main reasons.
\begin{itemize}
\item For task (15-5, 3), \plusname fails because the scorer in this task provides little information. In COP 15-5, a solution is a sequence of editions, and a solution is valid if the results of these editions are equal to the target. In this sense, almost all partial solutions are invalid, on which the scorer in (15-5, 3) simply returns $-\infty$ according to the third principle. At this time, \plusname can hardly extract examples for $\red{?R}$ and thus fails in applying thinning. 
\item For tasks (15-8, 2) and (15-8, 3), \plusname fails because of useless transitions. In both tasks, $\psi$ generates $O(n)$ transitions but only $O(1)$ among them can lead to valid solutions. However, as all the three supplementary rules in \plusname keep the recursive structure unchanged, this non-optimal behavior remains in the result and thus leads to a higher time complexity. Optimizing the recursive structure of hylomorphism is future work.
\item For the other $8$ tasks, \plusname fails because their input programs involve complex non-linear expressions. As mentioned in Section \ref{section:implementation}, \plusname uses an enumerative solver, namely \textit{observational equivalence}, to synthesize $\red{?c}$ required by the first and the third supplementary rules when non-linear expressions are involved. Because the scalability of this solver is limited, \plusname times out while applying the first and the third rules when the target $\red{?c}$ is non-linear and too large, e.g., $x_1 - \text{dis}(p_1, p_2) + \text{dis}(p_1, p_3) + \text{dis}(p_2, p_3)$ in Task 15-3, where $\text{dis}(p,q)$ is the abbreviation of $(p.x-q.x)^2 + (p.y-q.y)^2$. This fact shows that the effectiveness of \mainname can be further improved by designing efficient solvers for synthesizing $\red{?c}$.
\end{itemize}

\subsection{RQ2: Performance of Thinning and Supplementary Rules in \plusname}
\begin{table}[t]
    \caption{The performance of each transformation step.}
    \renewcommand\arraystretch{1.1}
    \begin{spacing}{1}
        \small
        \begin{tabular}{|c|c|c|c|c!{\vrule width 1.5pt}c|c|c|c|c|}
            \Xhline{1pt}
            \hline
            Name& Exp & Poly & $\bot$ & Time & Name& Exp & Poly & $\bot$ & Time \\
            \Xhline{1pt}
            \hline
            Thinning & $31$& $0$&$1$ & $4.2$  &Rule 1 & $0$& $3$& $7$&$32.8$\\
            \hline
            Rule 2 & $5$& $0$ & $0$& $5.1$ & Rule 3& $0$& $24$& $3$&$59.2$\\
            \hline
            \Xhline{1pt}
        \end{tabular}
    \label{table:rq2}
    \end{spacing}
\end{table}
We manually analyze all intermediate results of \plusname and summarize the performance of thinning and each supplementary rule in Table \ref{table:rq2}, where \textit{Name} lists the name of the corresponding rule, \textit{Exp} lists the number of tasks where the rule achieves an exponential speed-up, \textit{Poly} lists the number of tasks where the rule achieves a polynomial speed-up, $\bot$ lists the number of failed tasks, and \textit{Time} records the average time cost (seconds) on all tasks.

According to Table \ref{table:rq2}, thinning and the second rule produce all exponential speed-ups, while the first rule and the third rule produce all polynomial ones. This result matches the target of each rule: (1) the number of solutions, optimized by thinning, and the number states, optimized by the second rule, can be exponential, and (2) the scale of solutions, optimized by the first rule, and the scale of the states, optimized by the third rule, are usually polynomial.

Note that the effects of the first two rules seem to be insignificant in Table \ref{table:rq2} because they are usually \textit{delayed} by the third rule. In many cases, the time cost of operating states forms a bottleneck of the time complexity and thus the effects of the first two rules will not be revealed until the third rule is applied. For instance, in the example discussed in Section \ref{section:overview}, 
each of the three rules reduces an $O(n)$ component to $O(1)$, but the overall time complexity would not change if any of the components remains.
 
\section{Related Work} \label{section:related}
\noindent \textbf{Program Calculation}. This paper is related to those studies for deriving dynamic programming in program calculation. First, \citet{DBLP:conf/plilp/Moor95, DBLP:conf/flops/MorihataKO14,DBLP:conf/pepm/Mu08,DBLP:books/daglib/0096998} manually derive dynamic programming algorithms by \textit{thinning}. Among them, \citet{DBLP:conf/flops/MorihataKO14} notice that applying thinning solely may not be enough to derive an efficient dynamic programming algorithm, and uses a rule namely \textit{incrementalization} to optimize the program generated by thinning. Compared to the supplementary rules proposed in our paper, the application of this rule is still manual and is restricted to associative and commutative operators.

Second, there are several existing studies on automating thinning. \citet{DBLP:journals/jfp/Bird01a, sasano2000make} focus only on a special kind of COPs namely \textit{maximum marking problem}, where a partial solution is a set of weighted items and the objective function is the total weight of the selected items. \citet{DBLP:journals/ngc/Morihata11} focuses on a special kind of hylomorphisms namely sequential decision procedures, which recurses strictly according to the structure of a list. In comparison, the scopes of these three approaches are strictly more narrow than our approaches \mainname and \plusname. Only $9/37 (24.3\%)$ tasks in our dataset are instances of the maximum marking problem, and only $10/37 (27.0\%)$ tasks in our dataset use sequential decision procedures.

Last, there are other approaches for deriving dynamic programming beside thinning~\cite{DBLP:conf/oopsla/PuBS11, DBLP:journals/scp/GiegerichMS04, DBLP:journals/csur/PettorossiP96, DBLP:conf/sigsoft/0003ML21, DBLP:conf/ppdp/SauthoffJG11, DBLP:journals/lisp/LiuS03}. Most of them are manual or semi-automated. The only automated approach we know is a transformation rule that automatically generates dynamic programming for a sequential decision procedure on lists when several conditions are satisfied~\cite{DBLP:conf/sigsoft/0003ML21}, of which the scope is strictly more narrow than ours due to the requirement on sequential decision procedures.

\noindent \textbf{Program Synthesis}. There have been many synthesizers proposed for automatically synthesizing algorithms or efficient programs~\cite{DBLP:conf/pldi/SmithA16,DBLP:conf/pldi/MoritaMMHT07,toronto21,DBLP:conf/pldi/FedyukovichAB17,acar2005self,DBLP:conf/pldi/KnothWP019,DBLP:conf/cav/HuCDR21}, but none of them are for synthesizing dynamic programming algorithms.

Our approaches use program synthesizers to (1) synthesize preorders for thinning and Rule 2, and (2) synthesize program fragments for Rule 1 and Rule 3. The specification of both tasks are instances of relational specification, and thus our approach is related to \textit{Relish}~\cite{DBLP:journals/pacmpl/0001WD18}, a general solver for relational specifications. However, \textit{Relish} cannot be applied to our tasks because (1) \textit{Relish} cannot optimize an objective function while synthesis, and (2) the \textit{finite tree automata} used by \textit{Relish} does not directly support lambda expressions, which is included in our grammar.

There are also many synthesizers for input-output specifications~\cite{DBLP:journals/pacmpl/JiS0H20, DBLP:conf/popl/Gulwani11, DBLP:conf/pldi/FeserCD15, DBLP:conf/pldi/OseraZ15, DBLP:conf/iclr/BalogGBNT17, DBLP:journals/pacmpl/WangDS18}. However, such specifications do not exist in both synthesis tasks, and thus all these synthesizers are unavailable.

\section{Conclusion}\label{section:conclusion}
In this paper, we propose two novel synthesizers \mainname and \plusname for automatically synthesizing efficient memoization for relational hylomorphism. We demonstrate the efficiency and effectiveness of both approaches on a dataset including $37$ tasks related to $17$ COPs in our evaluation. 

This paper is motivated by the theories in program calculation for deriving dynamic programming algorithms. We notice that there are also studies for deriving other algorithms such as \textit{greedy algorithm}~\cite{DBLP:conf/ifip2-1/BirdM93, helman1989theory} and \textit{branch-and-bound}~\cite{DBLP:journals/scp/Fokkinga91} from relational hylomorphisms. Extending our approaches to support these algorithms is future work.
\bibliography{ref}
\clearpage
\appendix
\section{Appendix}
In this section, we complete the proofs of the lemmas and the theorems in our paper.

\begin{theorem} [Theorem \ref{theorem:thinning}] Given a keyword preorder $R$ of $\{(op_i, k_i)\}$, define function $N_R(S)$ as the following, where $\textit{range}(k, S)$ is the range of $k$ on $S$, i.e., $\max_{a \in S}(k\ a) - \min_{a \in S}(k \ a)$, and $\max_1(S)$ returns the largest element in $S$ with default value $1$.
    $$
    N_R(S) \coloneqq \left( \prod_{i}\textit{range}(k_i, S)\right) \bigg/ \max_i\big(\textit{range}(k_i, S)\ \big|\ op_i \in \{\leq, \geq\}\big)  
    $$
    \begin{itemize}
    \item For any set $S$, $|\textit{thin}[R]\ S| \leq N_R(S)$.
    \item There is an implementation of $\textit{thin}[R]$ with time complexity $O(N_R(S)\textit{size}(R) + T_R(S))$, where $S$ is the input set, $\textit{size}(R)$ is the number of comparisons in $R$, $T_R(s)$ is the time complexity of evaluating all key functions in $R$ for all elements in $S$.
    \end{itemize}
    \end{theorem}
\begin{proof} Let $K_{\leq}$ be the set of key functions in $R$ with operator $\leq$, and let $k^*$ be the key function in $K_{\leq}$ with the largest range on $S$, i.e., $\arg \max \textit{range}(k, S), k \in K_{\leq}$. Especially, when $K_{\leq}$ is empty, $k^*$ is defined as the constant function $\lambda x. 0$.

    Let $K = \{k_1, \dots, k_m\}$ be the set of key functions in $R$ excluding $k^*$. According to the definition of $N_R(S)$, we have the following equality.
    $$
    N_R(S) = \prod_{i=1}^m \textit{range}(k_i, S)
    $$

We start with the first claim. Define feature function $f_k$ as $k_1 \triangle \dots \triangle k_m$. Then $N_R(S)$ is the range of $f_k$. By the definition of the keyword preorder, we have the following formula:
$$
f_k\ a = f_k\ b \rightarrow (aRb \leftrightarrow k^*\ a \leq k^*\ b)
$$
In other words, for elements where the outputs of the feature function are the same, their order in $R$ is total. Therefore, the number of maximal values in $S$ is no more than the range of the key function, i.e., $N_R(S)$.

Then, for the second claim, Algorithm \ref{appendix-alg:thin} shows an implementation of $\textit{thin}$. The time complexity of the first loop (Lines 6-10) is $O(T_R(S))$ and the time complexity of the second loop (Lines 11-20) is $O(N_R(S)\textit{size}(R))$. Therefore, the overall time complexity of Algorithm \ref{appendix-alg:thin} is $O(N_R(S)\textit{size}(R) + T_R(S))$. 
\begin{algorithm}[t]
    \small
    \caption{An implementation of $\textit{thin}[R]$.}
    \KwIn{A set $S$ of elements.}
    \KwOut{A subset including all maximal values in $S$.}
    \label{appendix-alg:thin}
    \LinesNumbered
    \SetKwProg{Fn}{Function}{:}{}
    Extract $k^*$ and $f_k = k_1 \triangle \dots \triangle k_m$ from $R$; \\
    $op_i \gets $ the operator corresponding to $k_i$; \\
    $[mi_i, ma_i] \gets$ the range of $k_i$ on $S$; \\
    $\mathbb W \gets [mi_1, ma_1] \times [mi_2, ma_2] \times \dots \times [mi_m, ma_m]$; \\
    $\forall w \in \mathbb W, \textit{Val}[w] \gets \bot$; \\
    \ForEach {$ a \in S$}{
        \If {$\textit{Val}[f_k\ a] = \bot \vee k^*\ (\textit{Val}[f_k\ a]) \leq k^*\ a$}{
            $\textit{Vak}[f_k\ a] \gets a$;
        }
    }
    \ForEach {$i \in [1, m]$}{
        \lIf{$op_i \in \{=\}$}{\textbf{continue}}
        \ForEach {$w \in \mathbb W$ in the decreasing order of $w.i$}{
            \lIf{$w.i = mi_i \vee \textit{Val}[w] =\bot $}{\textbf{continue}}
            $w' \gets w$; \quad $w'.i \gets w.i - 1$; \\
            \If{$\textit{Val}[w'] = \bot \vee k^*\ \textit{Val}[w'] \leq k^*\ \textit{Val}[w]$}{
                $\textit{Val}[w'] \gets \textit{Val}[w]$; \\
            }
        }
    } 
    \Return$ \{a \mid a \in S \wedge \textit{Val}[f_k\ a] = a\}$;
\end{algorithm}

The remaining task is to prove the correctness of Algorithm \ref{appendix-alg:thin}. Let $\mathcal A_x$ be the algorithm weakened from Algorithm \ref{appendix-alg:thin} by replacing the loop upper bound in Line $11$ from $m$ to $x$. Besides, let $R_x$ be the keyword preorder $\{(op_i, id)_{i=1}^x\} \cup \{(=, id)_{i=x+1}^m\}$. Now, consider the following claim.
\begin{itemize}
\item After running $\mathcal A_x$ on set $S$, the value of $\textit{Val}[w]$ is equal to $\arg \max_{a} k^*\ a, a \in S \wedge wR_x(f_k\ a)$. If there is no such $a$ exist, $\textit{Val}[w]$ is equal to $\bot$. 
\end{itemize}

If this claim holds, after running $\mathcal A_m$, i.e., Algorithm \ref{appendix-alg:thin} on $S$, $\textit{Val}[f_k\ a] = a$ if and only if $a$ is a local maximal in $S$. Therefore, we get the correctness of Algorithm \ref{appendix-alg:thin}.

To prove this claim, we make an induction on $m$. First, when $m$ is equal to $0$, this claim holds because only the element with the largest output of $k^*$ is retained while initializing $\textit{Val}$ (Lines 6-10).

Then, for any $x \in [1,m]$, assume that the claim holds for $\mathcal A_{x-1}$. When $op_x$ is equal to $=$, the correctness of $\mathcal A_{x-1}$ directly implies the correctness of $\mathcal A_x$. Therefore, we consider only the case where $op_x \in \{\leq\}$ below.

Let $\textit{Val}'$ be the value of $\textit{Val}$ after running $\mathcal A_{x-1}$, and let $\textit{Val}$ be the value of $\textit{Val}$ after running $\mathcal A_{x}$. For any $w \in W$ and $i \in [mi_x, ma_x]$, let $w_i$ be the feature that $\forall j \neq x, w_i.j = w.j$ and $w_i.x = i$. According to Lines 11-20 in Algorithm \ref{appendix-alg:thin}, $\textit{Val}[w]$ is equal to the element with the largest output of $k^*$ among $\textit{Val}'[w_{w.i}], \textit{Val}'[w_{w.i + 1}], \dots, \textit{Val}'[w_{\textit{ma}_x}]$. (For simplicity, we define $k^*\ \bot$ as $-\infty$). 

Assume that the claim does not hold for $\mathcal A_{x}$. Then, there exists $w \in \mathbb W$ and $a \in S$ satisfying the following formula.
\begin{align*}
    &k^*\ \textit{Val}[w] < k^*\ a \wedge wR_x(f_k\ a) \\
    \implies & k^*\ \textit{Val'}\left[w_{k_x\ \!a}\right] < k^*\ a \wedge w_{k_x\ \!a}R_{x-1}(f_{k}\ a) 
 \end{align*}
This fact contradicts with the inductive hypothesis and thus the induction holds. 
\end{proof}

\begin{theorem}[Theorem \ref{theorem:thinning-theorem}] Given program $(h\!= \!\hylo{\phi, \psi}_{\F},o)$ and preorder $R$, for any instance $i$,  $(rg(\textit{thin}[R] \circ \textit{cup} \circ \mathsf P \phi, \psi)_{\F}, o) \sim_i (h, o)$ if the following two conditions are satisfied.
    \begin{enumerate}
    \item $\forall s \in S_h\ i, \forall p_1, p_2 \in h\ s, p_1Rp_2 \rightarrow (o\ p_1 \leq o\ p_2)$.
    \item $\forall s \in S_h\ i, \forall \overline{p_1} = (p_{1,1}, \dots, p_{1,k}), \overline{p_2} = (p_{2,1}, \dots, p_{2,k})$, where $p_{1,i}$ and $p_{2,i}$ are partial solutions of the same search state for all $i \in [1, k]$, the following formula is always satisfied. 
    \begin{align}
    \bigwedge_{i=1}^k p_{1,i}Rp_{2,i} \rightarrow \forall p_1', \bigg(\overline{p_1} \twoheadrightarrow_{h,s} p_1' \rightarrow \exists p_2', \big(\overline{p_2}\twoheadrightarrow_{h,s} p_2' \wedge p_1' R p_2'\big)\bigg) \label{formula:thinning2}
    \end{align}
    \end{enumerate}
\end{theorem}
\begin{proof} For simplicity, we use $r$ to denote $rg(\textit{thin}[R] \circ \textit{cup} \circ \mathsf P \phi, \psi)_{\F}$. Because $\psi$ in $h$ is also used in $r$, the search tree generated by $r$ and $h$ on instance $i$ are exactly the same. 

For simplicity, we use $S_1 \sqsupseteq_R S_2$ to denote that elements in $S_1$ dominates elements in $S_2$ in the sense of preorder $R$, i.e., $\forall a \in S_2, \exists b \in S_1, aRb$. By the definition of \textit{thin}, for any preorder $R$ and any set $S$, $\textit{thin}[R]\ S \sqsupseteq_R S$ always holds. 

Let us consider the following claim.
    \begin{itemize}
        \item For any state $s$ in $S_h\ i$, $r\ s \subseteq h\ s \wedge r\ s \sqsupseteq_R h\ s$. 
    \end{itemize}
Let $p_o$ be any solution with the largest objective value in $h\ i$. If this claim holds, there must be a solution $p^*$ in $r\ i$ such that $p_oRp^*$. By the precondition that $(\leq, o) \in R$, $o\ p^* \geq o\ p_o$. Because $p^* \in r\ i \subseteq h\ i$, we have $o\ p^* = o\ p_o$. Therefore, at least one solution with the largest objective value are retained in $r\ i$, which implies that $(r, o) \sim_i (h, o)$. 

We prove this claim by structural induction on the search tree. 
First, $r\ s \subseteq h\ s$ can be obtained by the definition of $rg$ and $\hylo{\phi, \psi}_{\F}$. Let us unfold the definition of $h$ and $r$.
\begin{align*}
    h &= \textit{cup} \circ \mathsf P \phi \circ \textit{cup} \circ \mathsf P(\textit{car}[\F] \circ \mathsf F h) \circ \psi \\
    r &= \textit{thin}[R] \circ \textit{cup} \circ \mathsf P \phi \circ \textit{cup} \circ \mathsf P(\textit{car}[\F] \circ \mathsf F r) \circ \psi \\
\end{align*}

Starting from the inductive hypothesis, we have the following derivation. 
\begin{align*}
    &\forall s' \in T_h\ s, r\ s' \subseteq h\ s' \\
    \implies& \forall t \in \psi\ s, (\textit{car}[\F] \circ \mathsf F r)\ t \subseteq (\textit{car}[\F] \circ \mathsf F h)\ t \\
    \implies& (\textit{cup} \circ \mathsf P(\textit{car}[\F] \circ \mathsf F r) \circ \psi)\ s \subseteq (\textit{cup} \circ \mathsf P(\textit{car}[\F] \circ \mathsf F h) \circ \psi)\ s \\
    \implies & r\ s \subseteq h\ s
\end{align*}
By the induction, we prove that $\forall s \in S_h\ i, r\ s \subseteq h\ s$. 

The remaining task is to prove $\forall s \in S_h\ i, r\ s \sqsupseteq_R h\ s$. For any state $s$, let be the set of partial solutions constructed in $r\ s$ before applying $\textit{thin}[R]$. Let us consider another claim. 
\begin{itemize}
\item For any state $s$ in $S_h\ i$, $P_s \sqsupseteq_R h\ s$. 
\end{itemize}
If the second claim holds, we prove the first claim by $$r\ s = \textit{thin}[R]\ P_s \sqsupseteq_R P_s \sqsupseteq_R h\ s$$

Therefore, the remaining task is to prove the second claim via the inductive hypothesis. Suppose this claim does not hold for state $s$. 
\begin{align} 
    P_s \not \sqsupseteq_R h \ s \implies \exists p \in h\ s, \forall p' \in P_s, \neg pRp' \label{formula:assump-step1}
\end{align}
Suppose partial solution $p$ is constructed from partial solutions $p_1, \dots, p_k$ where $p_i$ is taken from state $s_i$. By the inductive hypothesis, for each $i \in [1,k]$, there exists $p_{i}' \in r\ s_i$ such that $p_i \red{?R} p_{i}'$. Let $\overline{p} = (p_1, \dots, p_k)$ and $\overline{p'} = (p_1', \dots, p_k')$. 
\begin{align}
    \bigwedge _{i=1}^k p_i \red{?R} p'_i
    \implies &\forall p_1', \big(\overline{p} \twoheadrightarrow_{h,s} p_1' \rightarrow \exists p_2', \overline{p'}\twoheadrightarrow_{h,s} p_2' \wedge p_1' \red{?R} p_2'\big) \nonumber \\
    \implies& \exists p_2', \overline{p'}\twoheadrightarrow_{h,s} p_2' \wedge p \red{?R} p_2' \nonumber \\
    \implies& \exists p_2' \in P_s, p \red{?R} p_2'  \label{formula:contradict-step1}
\end{align}
Formula \ref{formula:contradict-step1} contradicts with Formula \ref{formula:assump-step1}. Therefore, we prove the second claim, and thus the induction holds.
\end{proof}

\begin{lemma}[Lemma \ref{lemma:monoce}] Given instance $i$, for any two keyword preorders $R_1, R_2$ where all comparisons in $R_1$ are included in $R_2$, the following formula is always satisfied.
    $$
    \forall (\overline{p_1}, \overline{p_2}) \in CE(R_1, i), (\overline{p_1}, \overline{p_2}) \notin CE(R_2, i) \leftrightarrow \neg \overline{p_1}(R_2/R_1) \overline{p_2}
    $$
where $R_2/R_1$ represents the keyword preorder formed by the comparisons in $R_2$ that are not used in $R_1$.
\end{lemma}
\begin{proof} We start with the $\leftarrow$ direction. Suppose there is an example $e$ in $CE(R_1, i)$ satisfying $\exists (p_1, p_2) \in e, \neg p_1(R_2/R_1)p_2$. As the comparisons in $R_2/R_1$ are included in $R_2$, this premise implies $\exists (p_1, p_2) \in e, \neg p_1R_2p_2$. By Formula \ref{formula:thinning-eq}, $e$ cannot be a counter example for $R_2$, i.e., $e \notin CE(R_2, i)$.
    
    For the $\rightarrow$ direction, suppose there is an example $e$ in $CE(R_1, i)$ such that $\forall (p_1, p_2) \in e, p_1(R_2/R_1)p_2$. 
    
    Let $(p_{1,1}, p_{2,1}), \dots, (p_{1,n}, p_{2,n})$ be all pairs in example $e$, let $\overline{p_1}$ and $\overline{p_2}$ be the sequences of $p_{1,j}$ and $p_{2,j}$ respectively. Be the definition of $CE$, we have (1) $\forall (p_1, p_2) \in e, p_1 R_1 p_2$, (2) the following formula.

    \begin{align}
    &\exists p'_1, \overline{p_1} \twoheadrightarrow_{s} p_1' \wedge \forall p'_2, \big(\overline{p_2} \twoheadrightarrow_s p'_2 \rightarrow \neg p_{1}' R_1 p_{2}' \big) \nonumber \\
    \implies& \exists p'_1, \overline{p_1} \twoheadrightarrow_{s} p_1' \wedge \forall p'_2, \big(\overline{p_2} \twoheadrightarrow_s p'_2 \rightarrow \neg p_{1}' R_2 p_{2}' \big)  \label{formula:mono1}
    \end{align}

    By the definition of keyword preorders, we have the following derivation.

    \begin{align}
        &\forall (p_1, p_2) \in e, p_1 R_1 p_2  \wedge \forall (p_1, p_2) \in e, p_1 (R_2/R_1)p_2 \nonumber\\
        \implies& \forall (p_1, p_2) \in e, \forall (op, k) \in R_2, (k\ p_1)op(k\ p_2) \nonumber \\
        \implies& \forall (p_1, p_2) \in e, p_1R2p_2 \label{formula:mono2}
    \end{align}

    Combining Formula \ref{formula:mono2} with \ref{formula:mono1}, we know example $e$ is in $CE(R_2, i)$, and the other direction of this lemma is proved.
\end{proof}

\begin{lemma}[Lemma \ref{lemma:candidatecmp}] Given a set of instances $I$, for any two keyword preorders $R_1, R_2$ where all comparisons in $R_1$ are included in $R_2$ and $\forall i \in I, CE(R_2, i) = \emptyset$, there exists a comparison $(op, k) \in R_2/R_1$ satisfying at least $1/(|R_2| - |R_1|)$ portion of examples in $CE(R_1, I) =\cup_{i \in I} CE(R_1, i)$, i.e., 
    \begin{align*}
    \left|\left\{(p_1, p_2)\in CE(R_1, I)\ \big|\ \neg \big((k\ p_1)op(k\ p_2)\big)\right\}\right| \geq  |CE(R_1, I)| \big / (|R_2|-|R_1|)
    \end{align*}
    where $|R|$ represents the number of comparisons in keyword preorder $R$.
    \end{lemma}

    \begin{proof} Let $(op_1, k_1), \dots, (op_n, k_n)$ be comparisons in $R_2/R_1$. Define keyword preorders $R^p_x$ as $R_1 \cup \{(op_j, k_j)_{j=1}^x\}$, and define $R^a_x$ as $R_1 \cup \{(op_x, k_x)\}$. By the definition of keyword preorders, this lemma is equivalent to the following formula. 
        \begin{align}
        \exists x \in [1, n], \big|CE(R_1, I) \big/ CE(R_x^a, I)\big| \geq \frac{|CE(R_1, I)|}{n} \label{formula:cmp1}
        \end{align}
    
    We prove Formula \ref{formula:cmp1} in two steps. First, we prove that $\forall x \in [1,n]$ satisfies the following formula.
    \begin{align}
    \big|\big(CE(R^p_{x}, I) / CE(R_1, I) \big) \big / \big(CE(R^p_{x-1}, I)\big / CE(R_1, I)\big) \big| \leq \big|CE(R_1, I) \big/ CE(R_x^a, I)\big| \label{formula:cmp2}
    \end{align}

    For any $x$, let $C^p_x$ be the set in the left-hand side and let $C^a_x$ be the set in the right-hand side. Then, by Lemma \ref{lemma:monoce}, 
    \begin{align*}
    e \in C^p_x \iff& \forall (p_1, p_2) \in e, \forall j \in [1, x - 1], (k_j\ p_1)op_j(k_j\ p_2) \\
    &\wedge \exists (p_1, p_2) \in e, \exists j \in [1, x], \neg (k_j\ p_1)op_j (k_j\ p_2) \\
    \implies& \exists (p_1, p_2) \in e, \neg (k_x\ p_1) op_x (k_x\ p_2) \\
    \iff& e \in C^a_x
    \end{align*}
    Therefore, $|C^p_x| \leq |C^a_x|$ and thus Formula \ref{formula:cmp2} is proved.

    Then, we prove the following formula.
    \begin{align}
        \exists x \in [1, n], |C^p_x| \geq \frac{|CE(R_1, I)|}{n} \label{formula:cmp3}
    \end{align}

    Because $CE(R_2, I) = \emptyset$, we know $C^p_1 \cup C^p_x \cup \dots \cup C^p_n = CE(R_1, I)$. Therefore, $\sum_{i=1}^n |C^p_i| = |CE(R_1, I)|$. Let $x^*$ be the index where $|C^p_x|$ is maximized.
    $$
    n\left|C^p_{x^*}\right| \geq \sum_{i=1}^n \left|C^p_i\right| = |CE(R_1, I)|
    $$
    Therefore, we prove that Formula \ref{formula:cmp3} holds for $x = x^*$.

    As the combination of Formula \ref{formula:cmp2} and Formula \ref{formula:cmp3} implies Formula \ref{formula:cmp1}, the target lemma is proved. 
    \end{proof}

\begin{theorem} [Theorem \ref{theorem:complete}] Given program $(h, o)$, a set of instances $I$ and a grammar $G$ for available comparisons, if there exists a keyword preorder $R$ satisfying (1) $\forall i \in I, CE(R, i) = \emptyset$, and (2) $R$ is constructed by $(\leq, o)$ and some comparisons in $G$, \mainname must terminate and return such a keyword preorder.
\end{theorem}
\begin{proof} Let $R$ be any solution satisfying the three conditions. According to Algorithm \ref{alg:preorder}, given a finite set of comparisons and a size limit, function \Search always terminate.

    We name an invocation of \Search good if the comparison space including all comparisons used in $R/\{(\leq, o)\}$ and $n_c$ is no smaller than $\textit{size}(R) - 1$. According to the iteration used to decide $C$ and $n_c$, for any $t$, there will be $t$ good invocations finished within finite time.
    
    Let $(op_1, k_1), \dots, (op_n, k_n)$ be an order of comparisons used in $R/\{(\leq, o)\}$ such that for any $x \in [1,n], (op_x, k_x)$ will be a valid comparison for function $\CC$ in the $x$th turn if $(op_1, k_1), \dots, 
    $$(op_{x-1}, k_{x-1})$ are selected in the previous terms. According to Lemma \ref{lemma:monoce}, such an order must exist.

    Suppose the error rate of \CC is at most $c$, i.e., the probability for \CC to exclude a valid comparison is at most $c$. For a good invocation of \Search, $R$ will be found if $\forall x \in [1, n], (op_x, k_x)$ is not falsely excluded in the $x$th turn by \CC. Therefore, the probability for $R$ to be found in a good invocation is at least $c' = (1 - c) ^n$, which is a constant.
    
    So, the probability for \mainname not to terminate after $t$ good invocations is at most $(1 - c')^t$. When $t \rightarrow + \infty$, this probability converges to $0$.
\end{proof}

\begin{lemma}[Lemma \ref{theorem:step2}] Given instance $i$ and program $\textit{prog}_1$ in Form \ref{form:thinning}, let $\textit{prog}_1'$ be result of Rule 1. If for any query $q$ and constructor $m$, Formula \ref{formula:2q} and Formula \ref{formula:2m} are satisfied respectively, $\textit{prog}_1 \sim_i \textit{prog}_1'$ holds.
    \begin{gather}
        \forall e \in RE(q, i), q\ e = \red{?q[q]}\ (\mathsf F[q]\red{?f_p}\ e)  \\
        \forall e \in RE(m, i), \red{?f_p}\ (m\ e) = \red{?c[m]}\ (\F[m]\red{?f_p}\ e)  
    \end{gather}
    \end{lemma}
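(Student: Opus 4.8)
The plan is to show that $\textit{prog}_2'$ (Form \ref{form:prog2new}) simulates $\textit{prog}_2$ (Form \ref{form:step1}) along the change of representation $\red{?f_p}$, so that on the fixed instance $i$ both programs attain the same maximum objective value. Since $\psi$ is untouched, the two programs unfold into the \emph{same} search tree rooted at $i$; after two preliminary steps I will run a bottom-up induction over this tree.

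\emph{Step A (local simulation of $\phi$ vs.\ $\phi'$).} By structural induction on the AST of $\phi$, following the recursion of Algorithm \ref{alg:extract}, I would prove: for any input $t$ to $\phi$ and the corresponding input $\hat t = \F\red{?f_p}\ t$ to $\phi'$ (i.e.\ $\red{?f_p}$ applied at the recursive positions, the identity elsewhere), if every rewritten query/constructor occurrence is invoked only on arguments in its set $RE(\cdot,i)$, then $\phi'(\hat t)=\mathsf P\red{?f_p}(\phi(t))$. The base cases are exactly the two rewrites: a \textit{collect}\,$e$ arising from a constructor $m$ becomes \textit{collect}\,$(\red{?c[m]}(\dots))$, which by Formula \ref{formula:2m} collects exactly $\red{?f_p}$ of the value the original statement collected; a maximal sub-expression feeding a query $q$ becomes a call to $\red{?q[q]}$, which by Formula \ref{formula:2q} returns the identical scalar. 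As $\F$ is a polynomial functor, the recursive positions of the input to $\phi$ hold single partial solutions rather than sets, so every \textit{if}/\textit{foreach} in $\phi$ is governed by scalar (query) values and is reproduced verbatim, closing the induction. The same bookkeeping applied to $o$ and to each keyword $k$ of $R$ yields $\red{?q[o]}(\red{?f_p}\ q)=o\ q$ and $\red{?q[k]}(\red{?f_p}\ q)=k\ q$ for every partial solution $q$ occurring during $\textit{prog}_2\ i$.

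\emph{Step B (the search-tree induction, naive form).} From Step A, $q_1\,R\,q_2 \iff (\red{?f_p}\ q_1)\,R'\,(\red{?f_p}\ q_2)$ for partial solutions occurring during $\textit{prog}_2\ i$ (so $R'$ on $\red{?f_p}$-images is the push-forward of $R$), and $(\leq,\red{?q[o]})\in R'$ since $(\leq,o)\in R$. For each state $s$ let $P_s,P'_s$ be the partial-solution sets built at $s$ \emph{before} thinning, so $r\ s=\textit{thin}\ R\ P_s$ and $r'\ s=\textit{thin}\ R'\ P'_s$. Using Step A at the recursive step, an induction from the leaves gives $P'_s=\mathsf P\red{?f_p}(P_s)$, and---granting that $\textit{thin}\ R'$ commutes with the push-forward, i.e.\ $\textit{thin}\ R'\,(\mathsf P\red{?f_p}(P_s))=\mathsf P\red{?f_p}(\textit{thin}\ R\ P_s)$---also $r'\ s=\mathsf P\red{?f_p}(r\ s)$. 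At the root this yields $\{\red{?q[o]}\ p'\mid p'\in r'\ i\}=\{o\ q\mid q\in r\ i\}$, hence $\textit{prog}_2\sim_i\textit{prog}_2'$.

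\emph{Main obstacle.} The granted commutation is the crux and fails literally when $\red{?f_p}$ is non-injective: $\textit{thin}$ returns only \emph{some} minimal dominating subset, so on $\mathsf P\red{?f_p}(P_s)$ it may keep the image of a different representative of an $R$-equivalence class than the one $\textit{thin}\ R$ kept in $P_s$, after which $\textit{prog}_2'$ constructs (images of) partial solutions that $\textit{prog}_2$ never constructs. The remedy is to state the invariant up to domination rather than as an equality: prove, by induction from the leaves, that (i) every element of $r'\ s$ equals $\red{?f_p}(x)$ for some $x$ that is $R$-dominated by an element of $r\ s$, and (ii) $r'\ s$ dominates $\mathsf P\red{?f_p}(r\ s)$ in the sense of $R'$. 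Pushing (i)--(ii) through the recursive step requires $R$-domination to be preserved by the constructors inside $\phi$---this is precisely condition (2) of Lemma \ref{theorem:step1}, available because $\textit{prog}_2$ is the output of Step 1 and thus in Form \ref{form:step1} with $\forall i\in I,\,CE(R,i)=\emptyset$. At the root, (ii) with $(\leq,\red{?q[o]})\in R'$ gives $\max_{p'\in r'\ i}\red{?q[o]}\ p'\ge\max_{q\in r\ i}o\ q$, and (i) gives the reverse inequality, establishing $\textit{prog}_2\sim_i\textit{prog}_2'$; the argument is the $\red{?f_p}$-image of the domination induction used to prove Lemma \ref{appendix-theorem:step1}.
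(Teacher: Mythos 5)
There is a genuine gap, and it sits exactly in your ``remedy.'' Formulas \ref{formula:2q} and \ref{formula:2m} constrain the synthesized $\red{?q[q]}$ and $\red{?c[m]}$ \emph{only} on the recorded inputs $RE(q,i)$ and $RE(m,i)$, i.e.\ on the inputs that actually occur in the run of $\textit{prog}_2$ on $i$ (these are the partial solutions that survive $\textit{thin}\ R$ at each node). Your weakened invariant is designed precisely to tolerate the situation where $r'\ s'$ contains $\red{?f_p}(x)$ for some $x$ that $\textit{thin}\ R$ discarded in $\textit{prog}_2$; but then the next invocation of $\red{?c[m]}$ (or $\red{?q[q]}$, or $\red{?q[o]}$ at the root) in $\textit{prog}_2'$ is on an input that is not the image of any element of $RE(m,i)$, and the hypotheses of the lemma say nothing about its value there. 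Condition (2) of Lemma \ref{theorem:step1} does not help: it constrains the \emph{original} constructors $m$ (domination is preserved by $\twoheadrightarrow_s$), not the behavior of the synthesized $\red{?c[m]}$ off its example set, so neither (i) nor (ii) can be pushed through the recursive step, and even the root step $\red{?q[o]}(\red{?f_p}\ x)=o\ x$ is unjustified for an $x$ on which $o$ was never invoked. Note also that $CE(R,i)=\emptyset$ is not among the hypotheses of Lemma \ref{theorem:step2} (it holds in the pipeline, but the paper's proof of this lemma never uses it), so importing it signals that the argument has drifted away from what the lemma actually asserts.

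The obstacle you flag dissolves if you argue operationally, which is what the paper does: it runs a small-step, lockstep induction showing that the $k$-th configuration of $\textit{prog}_2'$ is literally that of $\textit{prog}_2$ with every partial solution replaced by its $\red{?f_p}$-image and every solution-related function replaced by its synthesized counterpart. Since $\textit{prog}_2'$ is obtained by expression-level replacement and uses the same concrete $\textit{thin}$ procedure, and since $\textit{thin}$ touches solutions only through the key functions of $R$ (which by Formula \ref{formula:2q} agree on images, the keys being queries), the thinned set in $\textit{prog}_2'$ is \emph{exactly} the image of the thinned set in $\textit{prog}_2$ --- ties are resolved identically because every decision inside $\textit{thin}$ depends only on key values and on the corresponding order of evaluation, so non-injectivity of $\red{?f_p}$ or $R$-equivalent elements never cause a divergence. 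Maintaining this exact correspondence is also what guarantees that every later call to $\red{?q}$/$\red{?c}$ stays inside $RE(\cdot,i)$, which is what makes Formulas \ref{formula:2q}--\ref{formula:2m} applicable at each step. In short, your Step A plus the ``naive'' Step B is essentially the paper's proof; the correct fix for your obstacle is not to weaken the invariant to domination but to justify the commutation of $\textit{thin}$ with the representation change deterministically, as part of the lockstep simulation.
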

    \begin{proof} Recall the form of $\textit{prog}_1$ and $\textit{prog}_1'$ as the following.
        \begin{align*}
            \textit{prog}_1 &= (rg((\textit{thin}\ \red{?R}) \circ \textit{cup} \circ \mathsf P\phi, \psi)_{\F}, o) \\
            \textit{prog}_1' &= (rg((\textit{thin}\ R') \circ \textit{cup} \circ \mathsf P{\phi}', \psi)_{\F}, \red{?q[o]})
        \end{align*}
    Comparing $\textit{prog}_1'$ with $\textit{prog}_1$, there are several expression-level differences: (1) all key functions in $\red{?R}$ are replaced with the corresponding $\red{?q}$, (2) the objective function is replaced with $\red{?q[o]}$, (3) all solution-related functions in $\phi$ are replaced with the corresponding $\red{?q}$ and $\red{?c}$. 
    
    Let $e_1, e_2$ be the small-step executions of $\textit{prog}_1$ and $\textit{prog}_1'$ on instance $i$, and let $e[k]$ be the $k$th program in execution $e$. Let us consider the following claim.
    \begin{itemize}
    \item For any $k$, $e_1[k]$ will be exactly the same as $e_2[k]$ after (1) replacing all solution-related functions with the corresponding $\red{?q}$ and $\red{?c}$, and (2) replacing all solutions with the outputs of $\red{?f_p}$. 
    \end{itemize}
    If this claim holds, the last programs in $e_1$ and $e_2$ must be the same because they are the outputs of $\textit{prog}_{1}$ and $\textit{prog}_{1}'$ and include neither functions nor solutions. So this claim implies $\textit{prog}_1 \sim_i \textit{prog}_1'$. 
    
    We prove this claim by induction on the number of steps. When $k = 0$, the claim directly holds because there is no solution constructed and the correspondence of functions is guaranteed by the construction of $\textit{prog}_1$. 
    
    Then for any $k >0$, consider the $k$th evaluation rule applied to $e_1$ and $e_2$. By the inductive hypothesis, these two evaluation rules must be the same. 
    \begin{itemize}
    \item If this evaluation rule relates to partial solutions, it must be the evaluation of a solution-related function. By the inductive hypothesis, (1) the scalar values in both inputs are exactly the same, and (2) the partial solutions used in $e_2$ are equal to the outputs of $\red{?f_p}$ on the partial solutions used in $e_1$. Therefore, the examples used in the synthesis task of Step 2 ensures that the outputs are still corresponding. At this time, the examples used in the synthesis task ensure that the evaluation result is still corresponding.
    \item If this evaluation rule does not relate to partial solutions, by the inductive hypothesis, the evaluation in $e_1$ and $e_2$ must be exactly the same. 
    \end{itemize}
    
    Therefore, the induction holds, and thus the claim holds.
    \end{proof}

    \begin{lemma}[Lemma \ref{theorem:step3}] Given instance $i$ and program $(r, o)$, where $r$ is a recursive generator, $(r^{\red{?f_m}}, o) \sim_i (r, o)$ if for any two states $s_1, s_2 \in (S_r\ i)$, $r\ s_1 \neq r\ s_2 \rightarrow \red{?f_m}\ s_1 \neq \red{?f_m}\ s_2$.
    \end{lemma}
    \begin{proof} Consider the following claim.
        \begin{itemize}
        \item Each time when $r^{\red{?f_m}}\ s$ returns, (1) the results is equal to $r\ s$, and (2) for any state $s' \in S_r\ i$, there is result recorded with keyword $\red{?f_m}\ s'$ implies that the results is $r\ s'$. 
        \end{itemize}
        If the claim holds, the lemma is obtained by $r^{\red{?f_m}}\ i = r\ i$.
        
        Let $r^{\red{?f_m}}\ s_1, \dots, r^{\red{?f_m}}\ s_n$ be all invocations of $r^{\red{?f_m}}$ during $r^{\red{?f_m}}\ i$ and suppose they are ordered according to the returning time. We prove the claim by induction on the prefixes of this sequence. For the empty prefix, the claim holds as the memoization space is empty. 
        
        Now, consider the $k$th invocation $r^{\red{?f_m}}\ s_k$. There are two cases. In the first case, there has been a corresponding result recorded in the memoization space. At this time, by the inductive hypothesis, this result must be equal to $r\ s_k$, and thus the claims still hold when $r^{\red{?f_m}}$ returns on $s_k$. 
        
        In the second case, there has not been a corresponding result recorded. By the inductive hypothesis, the results of the recursions made by $r^{\red{?f_m}}\ s_k$ must be the results of the corresponding recursions made by $r\ s_k$. Therefore, the execution of $r^{\red{?f_m}}\ s_k$ must be exactly the same with $r\ s_k$ and thus $r^{\red{?f_m}}\ s_k = r\ s_k$. By the examples used to synthesize $\red{?f_m}$, we know that for any other state $s \in S_r\ i$, $\red{?f_m}\ s = \red{?f_m}\ s_k$ implies that $r\ s = r\ s_k$, i.e., the memoized result in $r^{\red{?f_m}}\ s_k$.
        
        Therefore, the induction holds, and thus the claim holds.
        
        \end{proof}

        \begin{theorem} [Theorem \ref{theorem:correct}] Given input program $(h, o)$ where $h$ is a relational hylomorphism and a set of instances $I$, let $p^*$ be the program generated by \plusname with $I$. Then $\forall i \in I$, $(h, o) \sim_i p^*$.
        \end{theorem}
        \begin{proof} Because the correctness of Step 4 can be proved in the same way as Step 2, this theorem is directly from Theorem \ref{theorem:thinning-theorem}, Lemma \ref{theorem:step2}, Lemma \ref{theorem:step3}, and the correctness of Step 4.
        \end{proof}

        \begin{theorem} [Theorem \ref{theorem:efficiency}] Given input $(\hylo{\phi, \psi}_{\F}, o)$ and grammar $G$ specifying the program space for synthesis tasks, the program generated by \plusname must be pseudo-polynomial time if the following conditions are satisfied: (1) $\phi$, $\psi$ and programs in $G$ runs in pseudo-polynomial time, (2) each value and the size of each recursive data structure generated by the input program are pseudo-polynomial, (3) all operators in $G$ are linear, i.e., their outputs are bounded by a linear expression with respect to the input. 
        \end{theorem}
        \begin{proof}
            The time complexity of the resulting program can be decomposed into four factors: (1) the number of recursive invocations on the generator, (2) the maximum number of partial solutions returned by each invocation, (3) the time complexity of each invocation on the generator, and (4) the time complexity of each invocation on the scorer. To prove this theorem, we only need to prove that all of these four factors are pseudo-polynomial time.
        
            First, we prove that for any program in $G$ that returns a scalar value, its range is always pseudo-polynomial. For any such program $p$ in $G$, let $f_p(n, w)$ be a polynomial representing that the time cost of $p$ is at most $f_p(n, w)$ when $n$ scalar values in range $[-w, w]$ are provided as the input. 
        
            By the third precondition, there exists a constant $c$ such that for each operator $\oplus$ in $G$, for any input $\overline{x}$ and any output value $y \in \oplus \overline{x}$, $|y|$ is always at most $c \sum_{x \in \overline{x}} |x|$. 
        
            Suppose the size of program $p$ is $s_p$, which is a constant while analyzing the complexity of $p$. Now, suppose $n$ scalar values in range $[-w, w]$ are provided as the input to $p$. After executing the first operator, the sum of all available values is at most $f_p(n, w) \times cnw$, because there are at most $f_p(n, w)$ values due to the time limit and each value is at most $cnw$ according to the third precondition. Then, after the second operator, this sum increases to $f_p(n, w) \times c(f_p(n,w) \times cnw) = c^2 f_p(n,w)^2 \times nw$. In this way, we know that after executing all $s_p$ operators, the sum of all available values is at most $c^{s_p}f_p(n,w)^{s_p} \times nw$. Because $s_p$ is a constant, this upper bound is still pseudo-polynomial with respect to the input. 
        
            Second, we prove that the first two factors are pseudo-polynomial. The first factor is bounded by the range of $\red{?f_m}$, which is equal to the product of the ranges of key functions in $\red{?f_m}$. The second factor is bounded by the number of partial solutions returned by $\textit{thin}[\red{?R}]$. By Theorem \ref{theorem:thinning}, this value is also bounded by the product of the ranges of key functions in $\red{?R}$. Because the number of key functions in $\red{?f_m}$ and $\red{?R}$ are constants, we only need to prove that the range of each key function is pseudo-polynomial.
            \begin{itemize}
            \item For key functions in $\red{?f_m}$, by the second precondition, in the input program, both the size of a state and values in a state are pseudo-polynomial with respect to the global input. By our first result, we obtain that the range of each key function in $\red{?f_m}$ is pseudo-polynomial.
            \item For key functions in $\red{?R}$, by the second precondition, in the input program, both the size of a partial solution and values in a partial solution are pseudo-polynomial. By our first result, the scale of the new partial solution, i.e., the output of $\red{?f_p}$, must also be pseudo-polynomial. By the first result again, we obtain that the range of each key function in $\red{?R}$ is pseudo-polynomial.
            \end{itemize}
        
            Third, we prove that the third factor is pseudo-polynomial. According to Section \ref{section:plus}, the generator in the resulting program must be in the following form:
            $$
            rg(\textit{thin}[\red{?R}] \circ \textit{cup} \circ \mathsf P\phi', \psi')
            $$
            Therefore, the time complexity of each invocation can be further decomposed into four factors: (3.1) the time cost of $\textit{thin}[\red{?R}]$, (3.2) the time cost of $\phi'$, (3.3) the time cost of $\psi'$, and (3.4) the number of invocations of $\phi'$.
            \begin{itemize}
                \item According to Theorem \ref{theorem:thinning}, Factor 3.1 is bounded by the ranges of the key functions in $\red{?R}$, which has been proven to be pseudo-polynomial.
                \item For Factor 3.1 (3.2), the time cost of $\phi'$ ($\psi'$) is bounded by the time cost of $\phi$ ($\psi$) and all inserted program fragments $\red{?q}$ and $\red{?c}$ in Step 2 (Step 4). By the first precondition, their time costs are all pseudo-polynomial with respect to the new state, which has also been proven to be pseudo-polynomial in both values and scale. Therefore, the time cost of $\phi'$ ($\psi'$) is pseudo-polynomial.
                \item For Factor 3.3, by the first condition, the number of transitions (denoted as $n_t$) is pseudo-polynomial. The number of partial solutions returned by each recursive invocation (denoted as $n_p$) has been proven to be pseudo-polynomial, and the number of states (denoted by $n_s$) involved by a single transition is a constant. Therefore, the number of invocations of $\phi'$, which is bounded by $n_t \times n_p^{n_s}$, is also pseudo-polynomial.
            \end{itemize}
            Therefore, we prove that the third factor is also pseudo-polynomial with respect to the global input.
        
            At last, the fourth operator is also pseudo-polynomial because (1) the number of solutions and the scale of solutions are both pseudo-polynomial, and (2) the time complexity of the objective function, which is a program in $G$, is pseudo-polynomial by the first precondition. 
        
            In summary, all four factors are pseudo-polynomial, and thus we prove the target theorem.
           \end{proof}
\end{document}